\newcommand{\p}{\partial}
\newcommand{\const}{{\rm const}}
\newcommand{\rank}{\mathop{\rm rank}\nolimits}
\newcommand{\ord}{\mathop{\rm ord}\nolimits}
\newcommand{\spanindex}{{\mbox{\tiny$\langle\,\rangle$}}}
\newtheorem{theorem}{Theorem}%[section]
\newtheorem{corollary}[theorem]{Corollary}
\newtheorem{conjecture}[theorem]{Conjecture}
\newtheorem{proposition}[theorem]{Proposition}
\newtheorem*{proposition*}{Proposition}
{\theoremstyle{definition}

\newtheorem{remark}[theorem]{Remark}

}
\newcommand{\todo}[1][\null]{\ensuremath{\clubsuit}}
\newcommand{\noprint}[1]{}
\newcounter{mcasenum}
\begin{document}
\par\noindent {\LARGE\bf
Zeroth-order conservation laws\\ of two-dimensional shallow water equations\\ with variable bottom topography\par}

\vspace{5mm}\par\noindent{\large
Alexander Bihlo$^\dag$ and Roman O.\ Popovych$^\ddag$
}

\vspace{5mm}\par\noindent{\it
$^\dag$\,Department of Mathematics and Statistics, Memorial University of Newfoundland,\\
$\phantom{^\dag}$\,St.\ John's (NL) A1C 5S7, Canada
}

\vspace{2mm}\par\noindent{\it
%$^\ddag$\,Faculty of Mathematics, University of Vienna, Oskar-Morgenstern-Platz 1, \\$\phantom{^\ddag}$A-1090 Vienna, Austria
$^\ddag$\,Fakult\"at f\"ur Mathematik, Universit\"at Wien, Oskar-Morgenstern-Platz 1, A-1090 Wien, Austria
\\
$\phantom{^\ddag}$Institute of Mathematics of NAS of Ukraine, 3 Tereshchenkivska Str., 01024 Kyiv, Ukraine
}

\vspace{6mm}\par\noindent
E-mails:
abihlo@mun.ca,
rop@imath.kiev.ua

\vspace{7mm}\par\noindent\hspace*{8mm}\parbox{140mm}{\small
We classify zeroth-order conservation laws of systems from the class
of two-dimensional shallow water equations with variable bottom topography
using an optimized version of the method of furcate splitting. 
The classification is carried out up to equivalence generated by the equivalence group of this class.
We find additional point equivalences between some of the listed cases 
of extensions of the space of zeroth-order conservation laws,
which are inequivalent up to transformations from the equivalence group. 
Hamiltonian structures of systems of shallow water equations are used
for relating the classification of zeroth-order conservation laws of these systems to 
the classification of their Lie symmetries. 
We also construct generating sets of such conservation laws under action of Lie symmetries.
}\par\vspace{5mm}

\noprint{
MSC: 37K05, 76M60, 86A05, 35A30

76-XX   Fluid mechanics {For general continuum mechanics, see 74Axx, or other parts of 74-XX}
 76Mxx  Basic methods in fluid mechanics [See also 65-XX]
  76M60   Symmetry analysis, Lie group and algebra methods
86-XX  Geophysics [See also 76U05, 76V05]
 86Axx	Geophysics [See also 76U05, 76V05]
  86A05  Hydrology, hydrography, oceanography [See also 76Bxx, 76E20, 76Q05, 76Rxx, 76U05]
37Kxx  Infinite-dimensional Hamiltonian systems [See also 35Axx, 35Qxx] 
  37K05   Hamiltonian structures, symmetries, variational principles, conservation laws 
35-XX   Partial differential equations
  35A30   Geometric theory, characteristics, transformations [See also 58J70, 58J72]
  35B06   Symmetries, invariants, etc.

Keywords:
conservation law,
shallow water equations,
the method of furcate splitting,
Hamiltonian structure,
Lie symmetries,
equivalence group,
equivalence groupoid
}

\section{Introduction}

The shallow water equations play a distinguished role amongst models in geophysical fluid mechanics. 
They are used both as a starting model for constructing novel numerical schemes 
that should eventually be implemented for the full set 
of three-dimensional hydro-thermodynamical equations governing the evolution of the atmosphere--ocean system, 
as well as a practical model for simulating the evolution of water waves 
on the open ocean as needed for tsunami pre-warnings. 
See \cite{aech15a,bihl12By, brec19a,brec18a,cott14a,dele10a,flye12a,salm07a, tito97Ay,tito95a,tito98a,wan13Ay} 
and references therein for several results for both types of uses of the shallow water equations.

The wide areas of application of the shallow water equations justify exhaustive theoretical analysis as well, 
which has been the subject of various research programs over the past several decades. 
These considerations include the computation of exact solutions~\cite{cama19a,carr58a,levi89By,thac81a}, 
Lie symmetries \cite{akse16a,bila06Ay,ches09Ay,ches11a,chir14a,levi89By,siri16a,szat14a}, 
conservation laws \cite{akse16a,akse20a,cama19a,cava82a,wan13Ay} 
and the derivation of Hamiltonian structures \cite{cama19a,cava82a,salm98a,salm07a,shep90a}, 
just to name a few. 

We remind in more details some references on conservation laws of shallow water equations 
as this is the subject of the present paper. 
Conservation laws of the system of one-dimensional shallow water equations with flat bottom topography
were considered in~\cite{cava82a} jointly with a Hamiltonian formulation for this system found in~\cite[Section~II.6.5]{mani78a};
see also \cite[Section~2.2]{cama19a}. 
The case of linearly sloping bottom profile was discussed in~\cite{akyi82a,akyi87a}. 
Zeroth-order conservation laws for systems 
of one-dimensional shallow water equations with variable bottom topography in Eulerian coordinates 
were computed in~\cite{akse16a} depending on bottom topography profiles. 
First-order conservation laws for the one-dimensional 
variable bottom topography in Lagrangian variables were constructed in~\cite{akse20a} 
and then considered within the framework of Noether's theorem. 
The conservation laws of the two-dimensional case have previously been considered 
in \cite{egge94a,hydo05a,mont01a} as well as in \cite{salm98a,shep90a}, 
and have been used to construct conservative numerical schemes in \cite{salm07a,somm09a,wan13Ay}. 
All of these considerations, however, focus on the case of flat bottom topography.

We note that knowing conservation laws of systems of differential equations in geophysical fluid mechanics 
is of paramount importance for developing numerical models of these systems. 
As there are very few exact solutions known to the increasingly complex models 
arising in this field against which numerical models could be benchmarked, 
conservation laws provide critical information for assessing the quality of these numerical models. 

In the present paper, we classify zeroth-order conservation laws of systems from the class
of two-dimensional shallow water equations with variable bottom topography,
which are of the form
\begin{gather}\label{eq:2DSWEs}
\begin{split}
&u_t+uu_x+vu_y+h_x-b_x=0,\\
&v_t+uv_x+vv_y+h_y-b_y=0,\\
&h_t+(uh)_x+(vh)_y=0.
\end{split}
\end{gather}
Here
$(u,v)$ is the horizontal fluid velocity averaged over the height of the fluid column,
$h$ is the thickness of a fluid column,
$b=b(x,y)$ is the bottom topography measured downward with respect to a fixed reference level, 
which plays the role of a parameter function, 
and the gravitational acceleration is set to be equal 1 in dimensionless units.
See Figure~\ref{picture} for a graphical representation of the above quantities. 
Physically, systems from the class~\eqref{eq:2DSWEs} can be derived by depth integrating the (incompressible) Euler equations 
under the assumption that the horizontal scale is much larger than the vertical scale, 
with the fluid under consideration being in hydrostatic balance~\cite{vall17a}.
%\looseness=-1

%\noprint{
\begin{figure}
\centering
\includegraphics[width=.85\linewidth]{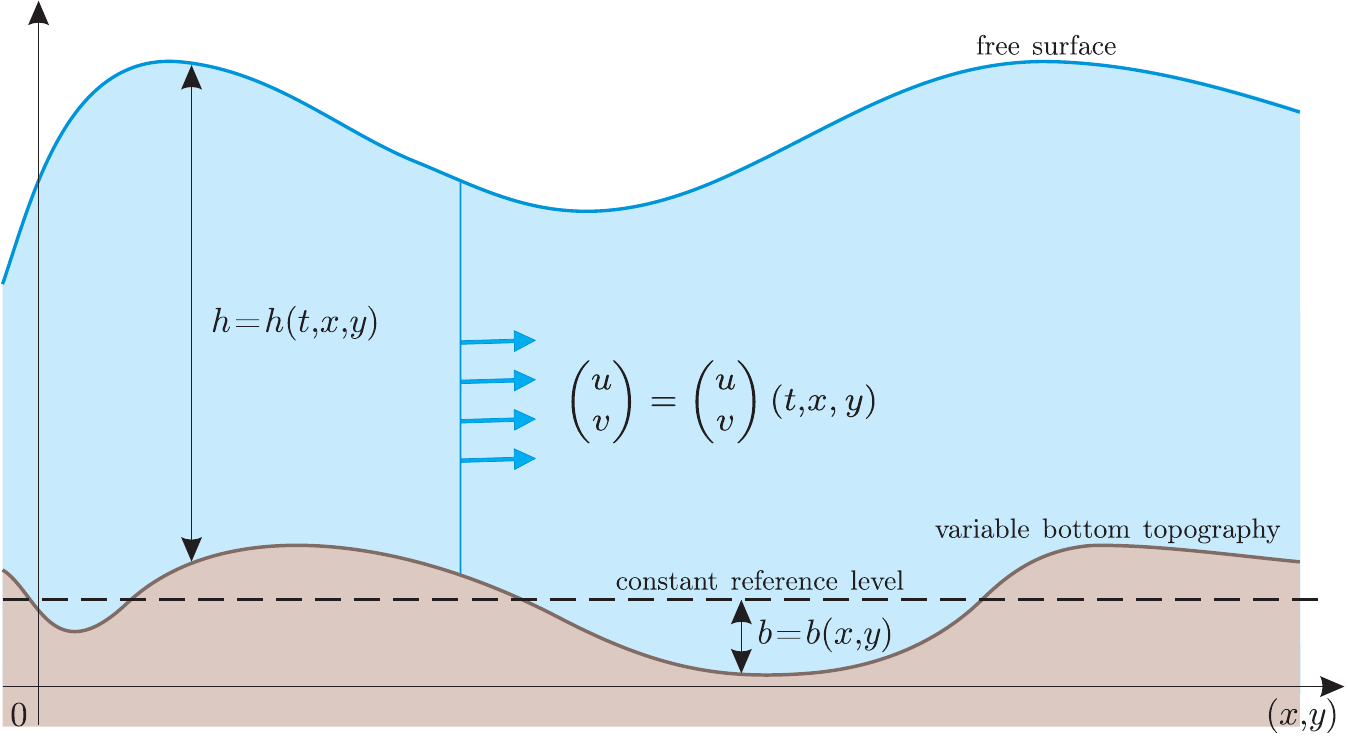}
\caption{The shallow water model.}
\label{picture}
\end{figure}
%}

Systems of shallow water equations in the form~\eqref{eq:2DSWEs} are of vital relevance to modeling tsunamis, 
and it is for this case that the form of the bottom topography profile is especially relevant. 
Knowing specific forms of bottom topographies 
for which there are extra conservation laws compared to the case of a generic bottom topography 
can provide new benchmarks against which these numerical models can be tested. 
We thus hope that the results reported in this paper could supply practitioners with information 
that is complementary to the exact solutions derived in seminal papers such as~\cite{carr58a,thac81a}.

Cases of extensions of the spaces of zeroth-order conservation laws
for systems from the class~\eqref{eq:2DSWEs} are classified 
up to the equivalence generated by the equivalence group~$G^\sim$ of this class. 
The consideration of this problem was started in~\cite{atam19a} 
but the results obtained there were not exhaustive. 
Earlier, a preliminary description of the above spaces was presented in~\cite{akse18b}.
We use an optimized version of the \emph{method of furcate splitting}, 
essentially following the proof of group classification of the class~\eqref{eq:2DSWEs} 
in~\cite{bihl19a}.
This is the first application of the method of furcate splitting to classifying conservation laws. 
Initially, the method was suggested in~\cite{niki01a} in the course of the group classification
of the class of nonlinear Schr\"odinger equations and was then applied to group classification 
of various classes of differential equations 
\cite{boyk01a,ivan10a,opan18b,vane15b,vane12a,vane15c,vasi01a}. 
Between some of the listed $G^\sim$-inequivalent cases of extensions of spaces of zeroth-order conservation laws, 
there are additional equivalences induced by admissible point transformations within the class~\eqref{eq:2DSWEs}
that are not generated jointly by elements of~$G^\sim$ and by point symmetry groups of systems from the class~\eqref{eq:2DSWEs}. 
Families of such admissible point transformations were found in~\cite{bihl19a}. 
We also extend the well-known Hamiltonian representation for the shallow water equations with flat bottom topography 
and the known family of distinguished (Casimir) functionals of the corresponding Hamiltonian operator
\cite[Eqs.~(3.6)--(3.7)]{salm88a}, see also \cite[Section~4.4]{shep90a}, 
to an arbitrary bottom topography, 
which allows us to relate the classifications of Lie symmetries and of zeroth-order conservation laws 
for systems from the class~\eqref{eq:2DSWEs}. 
To find generating sets of zeroth-order conservation laws under the action of Lie symmetries, 
we study this action and derive a convenient general formula 
describing the action of generalized symmetries on cosymmetries. 
%\looseness=1

\looseness=-1
The further organization of the paper is as follows.
For convenience of references, in Section~\ref{sec:TransPropertiesOfSWEs} we present 
results from~\cite{bihl19a} on the generalized equivalence group and the equivalence groupoid of the class~\eqref{eq:2DSWEs}
and on the classification of Lie symmetries of systems from this class. 
Section~\ref{sec:PreliminaryAnalysisOf0thOrderCLsOfSWEs} contains the preliminary analysis of determining equations for 
characteristics of zeroth-order conservation laws of systems from the class~\eqref{eq:2DSWEs} 
and the statement of classification results on such conservation laws.
The main points of the proof of the classification are presented in Section~\ref{sec:ClassificationProof} 
whereas details of applying the method of furcate splitting to it are moved to Section~\ref{sec:DetailsOfProof}. %the appendix.
Hamiltonian structures of systems from the class~\eqref{eq:2DSWEs} are used in Section~\ref{sec:HamiltoniansOf2DSWEs} 
for interpreting and relating the classifications of Lie symmetries and of zeroth-order conservation laws. 
Therein we also compute generating sets (see~\cite{kham82a} for a definition) 
of such conservation laws under the action of Lie symmetries
for all classification cases that are inequivalent up to point transformations. 
In the final Section~\ref{sec:ConclusionsSWE} we summarize the findings of the paper and discuss possible future research directions.

\section[Equivalence group, Lie symmetries and admissible transformations]
{Equivalence group, Lie symmetries and admissible\\ transformations}\label{sec:TransPropertiesOfSWEs}

According to the formal definition of a class of differential equations \cite{bihl17a,kuru18a,popo06Ay,popo10Ay}, 
we call the class~\eqref{eq:2DSWEs} the set of systems of the form~\eqref{eq:2DSWEs}
with the independent variables $(t,x,y)$ and the dependent variables $(u,v,h)$. 
The parameter function~$b$ plays the role of the arbitrary element of this class 
and, in view of its definition, runs through the solution set of the auxiliary system 
\begin{gather*}
b_u=b_{u_t}=b_{u_x}=b_{u_y}=0, \\
b_v=b_{v_t}=b_{v_x}=b_{v_y}=0, \\
b_h=b_{h_t}=b_{h_x}=b_{h_y}=0, \\
b_t=0, 
\end{gather*}
which contains no inequalities for~$b$. 
For a fixed value of the arbitrary element~$b$, 
by~$\mathcal L_b$ we denote the system 
from the class~\eqref{eq:2DSWEs} with this value of~$b$.

By the general definition of generalized equivalence group \cite{bihl17a,kuru18a,popo06Ay,popo10Ay}, 
a generalized equivalence transformation of the class~\eqref{eq:2DSWEs} is a point transformation 
in the space with coordinates $(t,x,y,u_{(1)},v_{(1)},h_{(1)},b)$ 
that preserves the contact structure of the first-order jet space $\mathrm J^1(t,x,y|u,v,h)$
with the independent variables $(t,x,y)$ and the dependent variables $(u,v,h)$, 
which is coordinatized by $(t,x,y,u_{(1)},v_{(1)},h_{(1)})$. 
Here $u_{(1)}=(u_t,u_x,u_y)$, etc.
At the same time, the arbitrary element~$b$ does not depend on (first) derivatives of dependent variables,
and thus the space with the coordinates $(t,x,y,u,v,h,b)$
can be assumed as the space underlying the action 
of the generalized equivalence group~$G^\sim$ of the class~\eqref{eq:2DSWEs}; 
cf.\ the discussion in the beginning of Section~2 in~\cite{opan18b}.
Applying the algebraic method suggested in~\cite{hydo00By} and further developed in~\cite{bihl15a,bihl11Cy,card12Ay},
in~\cite{bihl19a} we proved the following theorem.

\begin{theorem}\label{2d_gen_equiv}
The generalized equivalence group~$G^\sim$ of the class~\eqref{eq:2DSWEs} 
coincides with the usual equivalence group of this class and consists of the transformations
\begin{gather*}
\tilde t=\delta_1t+\delta_2,\quad
\tilde x=\delta_3x-\varepsilon\delta_4y+\delta_5, \quad
\tilde y=\delta_4x+\varepsilon\delta_3y+\delta_6,\\[1ex]
\tilde u=\frac{\delta_3}{\delta_1}u-\varepsilon\frac{\delta_4}{\delta_1}v,\quad
\tilde v=\frac{\delta_4}{\delta_1}u+\varepsilon\frac{\delta_3}{\delta_1}v,\quad
\tilde h=\frac{\delta_3^{\,2}+\delta_4^{\,2}}{\delta_1^{\,2}}h,\quad
\tilde b=\frac{\delta_3^{\,2}+\delta_4^{\,2}}{\delta_1^{\,\,2}}b+\delta_7,
\end{gather*}
where $\varepsilon=\pm1$ and the parameters $\delta_i,\, i=1,\dots,7,$ are arbitrary constants with $\delta_1(\delta_3^{\ 2}+\delta_4^{\ 2})\ne0$.
\end{theorem}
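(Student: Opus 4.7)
The plan is to apply the algebraic method for determining equivalence groups, as suggested in~\cite{hydo00By} and further developed in the cited works. The idea is first to compute the equivalence algebra $\mathfrak g^\sim$ by infinitesimal analysis and then to recover all admissible discrete equivalences by enumerating the point realizations of the outer automorphisms of $\mathfrak g^\sim$ compatible with the class~\eqref{eq:2DSWEs}.

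Concretely, I would start from a generic vector field on the augmented space,
\[
Q=\tau\p_t+\xi^1\p_x+\xi^2\p_y+\eta^1\p_u+\eta^2\p_v+\eta^3\p_h+\theta\p_b,
\]
with all coefficients depending a priori on $(t,x,y,u,v,h,b)$. Prolonging $Q$ to the appropriate order and imposing invariance of the system~\eqref{eq:2DSWEs} together with the auxiliary system constraining~$b$, one obtains determining equations that should be split with respect to parametric derivatives of $(u,v,h)$ as well as with respect to $b_x,b_y$ (treated as independent parameters since $b$ is an arbitrary element). This splitting is expected to reduce the system to a finite-dimensional linear one whose general solution yields: $\tau$ affine in $t$ alone; $(\xi^1,\xi^2)$ depending only on $(x,y)$ and producing the combination of translations, uniform scalings, and $SO(2)$-rotations visible in the theorem; $(\eta^1,\eta^2,\eta^3)$ linear in $(u,v,h)$ with coefficients dictated by the preceding ones; and $\theta$ affine in~$b$. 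Exponentiating the resulting algebra gives the connected component of $G^\sim$, whereas the discrete part (the sign $\e=\pm1$) is picked up by completing $\mathfrak g^\sim$ with outer automorphisms that are realized by point transformations preserving the class.

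The main obstacle is the claim that the \emph{generalized} equivalence group $G^\sim$ coincides with the ordinary one, that is, that no nontrivial dependence on $b$ enters the components $\tau,\xi^1,\xi^2,\eta^1,\eta^2,\eta^3$. Such dependence is admissible in the generalized framework and must therefore be ruled out directly from the determining equations. I expect this to follow by splitting the invariance conditions derived from the momentum equations of~\eqref{eq:2DSWEs} with respect to $b_x$ and $b_y$, whose linear occurrence leaves no room to absorb any $b$-derivatives of the remaining coefficient functions; a parallel argument applied to the continuity equation should pin down the $b$-independence of $\eta^3$. Once this has been established, parametrizing the resulting solution space by the seven constants $\delta_1,\dots,\delta_7$ and the sign $\e$, and verifying by direct substitution that the listed transformations indeed map any system~$\mathcal L_b$ into some system~$\mathcal L_{\tilde b}$ of the same class, completes the proof.
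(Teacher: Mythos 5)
Your overall strategy---infinitesimal computation of the equivalence algebra $\mathfrak g^\sim$ followed by an algebraic treatment of the global structure---is the same one the paper relies on: Theorem~\ref{2d_gen_equiv} is stated here without proof and attributed to~\cite{bihl19a}, where the algebraic method of~\cite{hydo00By,bihl11Cy,card12Ay} is applied. Your infinitesimal part is sound and would reproduce the connected component of~$G^\sim$, including the exclusion of $b$-dependence of the coefficients of the \emph{generators} by splitting with respect to~$b_x$ and~$b_y$.

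The gap lies in the passage from the connected component to the full group, which is where both nontrivial assertions of the theorem live: that the only elements outside the identity component are those accounted for by $\e=\pm1$ and $\sgn\delta_1$, and that the \emph{generalized} equivalence group coincides with the usual one. Your argument for $b$-independence is carried out entirely within the infinitesimal determining equations, so it constrains only the identity component; a priori a disconnected piece of~$G^\sim$ could still contain transformations whose $(t,x,y,u,v,h)$-components depend on~$b$, and nothing in your plan rules this out. Likewise, ``enumerating point realizations of outer automorphisms'' does not by itself bound~$G^\sim$: the homomorphism $T\mapsto T_*|_{\mathfrak g^\sim}$ has a kernel that must be shown to lie in the already-determined part of the group, and within each realizable automorphism class the residual functional freedom in~$T$ must still be eliminated. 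The way the algebraic method actually closes both gaps at once is to impose $T_*\mathfrak g^\sim=\mathfrak g^\sim$ on a \emph{completely general} point transformation of the augmented space $(t,x,y,u,v,h,b)$; since $\mathfrak g^\sim$ contains $\p_t$, $\p_x$, $\p_y$, $\p_b$, the rotation and the scalings, these push-forward conditions already force all components to be affine of the stated form and independent of~$b$, for continuous and discrete elements alike, after which direct substitution into~\eqref{eq:2DSWEs} fixes the remaining constants. Your concluding ``verify by direct substitution'' step establishes only that the listed transformations are admissible, not that there are no others.
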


The group~$G^\sim$ is generated by continuous equivalence transformations,
for which $\delta_1>0$ and $\varepsilon=1$, 
and two discrete equivalence transformations, 
which alternate signs of variables and are involutions,
\begin{gather*}
(t,x,y,u,v,h,b)\mapsto (-t,x,y,-u,-v,h,b), \\[.5ex]
(t,x,y,u,v,h,b)\mapsto (t,x,-y,u,-v,h,b).
\end{gather*}

Then we classified Lie symmetries of systems from the class~\eqref{eq:2DSWEs} up to the $G^\sim$-equivalence.%
\footnote{\looseness=1
The equivalence generated by the equivalence group of the class of systems of differential equations 
to be classified is a necessary element of the classical formulation of the group classification problem 
by Lie and Ovsiannikov~\cite[Chapter~III]{ovsi1982A} 
as well as of its natural extension to conservation laws~\cite[Section~3.3]{popo2008a}. 
The usage of this equivalence in the course of classifications 
of symmetry-like objects associated to systems of the class is justified by two mathematical facts. 
Firstly, a point transformation between two systems of differential equations 
induces an isomorphism between their maximal Lie invariance algebras 
(resp.\ between their spaces of local conservation laws, between their spaces of zeroth-order conservation laws, etc.). 
Secondly, the construction and the usage of the equivalence group of the class are in general much simpler 
than those of the corresponding equivalence groupoid. 
Moreover, the $G^\sim$-equivalence is also appropriate for the class~\eqref{eq:2DSWEs} 
from the physical point of view since the equivalence group~$G^\sim$ includes only simple transformations, 
which can be interpreted as physically relevant coordinate changes  
(shifts, scalings and reflections of the time and space variables and rotations of the space variables).
} 
For any fixed value of the arbitrary element~$b$, 
the maximal Lie invariance algebra~$\mathfrak g_b$ of the system~$\mathcal L_b$ 
consists of the vector fields of the form
\[2D(F^1)-c_1D^{\rm t}-c_2J+P(F^2,F^3),\]
where 
\begin{gather}\label{eq:2DSWEsRepresentationsForLieSymVFs}
\begin{split}
&D(F^1):=F^1\p_t+\tfrac12F^1_tx\p_x+\tfrac12F^1_ty\p_y-\tfrac12(F^1_tu-F^1_{tt}x)\p_u-\tfrac12(F^1_tv-F^1_{tt}y)\p_v-F^1_th\p_h,\\[1ex]
&D^{\rm t}:=t\p_t-u\p_u-v\p_v-2h\p_h,\quad
 J:=x\p_y-y\p_x+u\p_v-v\p_u,\\[1ex]
&P(F^2,F^3):=F^2\p_x+F^3\p_y+F^2_t\p_u+F^3_t\p_v,
\end{split}
\end{gather}
the parameters~$F^1$, $F^2$ and~$F^3$ are smooth functions of~$t$ 
and the parameters~$c_1$ and~$c_2$ are constants that satisfy the classifying equation
\begin{gather}\label{eq:2DSWEsClassifyingEq}
\begin{split}
&(F^1_tx+c_2y+F^2)b_x+(-c_2x+F^1_ty+F^3)b_y+2(F^1_t-c_1)b\\
&\qquad-F^1_{ttt}\frac{x^2+y^2}2-F^2_{tt}x-F^3_{tt}y-F^4=0,
\end{split}
\end{gather}
with $F^4$ being one more smooth parameter function of $t$. 
Up to antisymmetry, the nonzero commutation relations between the above vector fields 
are exhausted by the following ones:
\begin{gather}\label{eq:2DSWEsCommutationRelations}
\begin{split}
&[D(F^1),D(\tilde F^1)]=D(F^1\tilde F^1_t-\tilde F^1F^1_t),\quad
 [D(F^1),D^{\rm t}]=-D(tF^1_t-F^1),\\[1ex]
&[D(F^1),P(F^2,F^3)]=P\big(F^1F^2_t-\tfrac12F^1_tF^2,F^1F^3_t-\tfrac12F^1_tF^3\big),\\[1ex]
&[D^{\rm t},P(F^2,F^3)]=P(tF^2_t,tF^3_t),\quad 
 [J,P(F^2,F^3)]=P(F^3,-F^2).
\end{split}
\end{gather}

\newpage

The vector fields~\eqref{eq:2DSWEsRepresentationsForLieSymVFs} respectively generate 
the following local one-parameter groups of point transformations, 
where $\delta$ is the group parameter:
\begin{itemize}\itemsep=0ex
\item
$\tilde t=T,\ \tilde x=T_t^{1/2}x,\ \tilde y=T_t^{1/2}y,\ \tilde u=T_t^{-1/2}u+\frac12T_{tt}T_t^{-3/2}x,\ \tilde v=T_t^{-1/2}v+\frac12T_{tt}T_t^{-3/2}y$, $\tilde h=T_t^{-1}h$
with $T=T(t,\delta):=\hat H\big(H(t)+\delta\big)$, where $H$ is an antiderivative of~$1/F^1$, and $\hat H$~is the inverse of~$H$ with respect to~$t$, 
\ $\to$ \ arbitrary transformations of~$t$ with simultaneous linear transformations of the other variables with coefficients depending on~$t$, 
including shifts of~$t$ ($F^1=1$), concordant scalings of all variables ($F^1=t$) and time inversions ($F^1=t^2$).
\item
$\tilde t=e^\delta t,\ \tilde x=x,\ \tilde y=y,\ \tilde u=e^{-\delta}u,\ \tilde v=e^{-\delta}v,\ \tilde h=e^{-2\delta}h$ 
\ $\to$ \ scaling of~$t$ with simultaneous scalings of the dependent variables;
\item
$\tilde t=t,\ \tilde x=x\cos\delta-y\sin\delta,\ \tilde y=x\sin\delta+y\cos\delta,\ \tilde u=u\cos\delta-v\sin\delta,\ \tilde v=u\sin\delta+v\cos\delta,$ $\tilde h=h$ 
\ $\to$ \ simultaneous rotations in the $(x,y)$- and $(u,v)$-planes; 
\item
$\tilde t=t,\ \tilde x=x+\delta F^2(t),\ \tilde y=y+\delta F^3(t),\ \tilde u=u+\delta F^2_t(t),\ \tilde v=v+\delta F^3_t(t),\ \tilde h=h$ 
\ $\to$ \ generalized shifts of the space variables depending on~$t$, including usual shifts of the space variables ($F^2,F^3=\const$) and Galilean boosts ($F^2/t,F^3/t=\const$).
\end{itemize}

It is convenient to denote
\[D^{\rm s}:=2D(t)-2D^{\rm t}=x\p_x+y\p_y+u\p_u+v\p_v+2h\p_h\]
and to use hereafter, simultaneously with $(x,y)$, the polar coordinates~$(r,\varphi)$ on the $(x,y)$-plane,
\begin{gather*}
r:=\sqrt{x^2+y^2},\quad
\varphi:=\arctan\frac yx.
\end{gather*}

\begin{theorem}\label{thm:GroupClassificationOf2DSWEs1}
In the notation~\eqref{eq:2DSWEsRepresentationsForLieSymVFs}, 
the kernel Lie invariance algebra of the systems from the class~\eqref{eq:2DSWEs}
is $\mathfrak g^{\cap}=\langle D(1)\rangle$.
A~complete list of $G^\sim$-inequivalent Lie symmetry extensions within this class
is exhausted by the following cases,
where $f$ denotes an arbitrary smooth function of a single argument,
$\alpha$,~$\beta$, $\mu$ and~$\nu$ are arbitrary constants with $\alpha\geqslant0\bmod G^\sim$, $\beta>0$
and additional constraints indicated in the corresponding cases,
$\varepsilon=\pm1\bmod G^\sim$ and $\delta\in\{0,1\}\bmod G^\sim$.%
\footnote{%
The notation ``$\!{}\bmod G^\sim$'' means that the indicated constraints (here, $\varepsilon=\pm1$ and $\delta\in\{0,1\}$) 
can be set on the involved parameters ($\varepsilon$ and $\delta$, respectively) 
using transformations from the equivalence group~$G^\sim$.
}
\end{theorem}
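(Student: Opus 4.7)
The plan is to analyze the classifying equation~\eqref{eq:2DSWEsClassifyingEq} directly, establishing the kernel and then applying the method of furcate splitting for the extensions, with the equivalence group~$G^\sim$ of Theorem~\ref{2d_gen_equiv} used to bring admissible~$b$ to canonical form.

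For the kernel~$\mathfrak{g}^{\cap}$, I would require~\eqref{eq:2DSWEsClassifyingEq} to hold for every smooth $b(x,y)$. Treating $b$, $b_x$ and $b_y$ as algebraically independent of the monomials $1$, $x$, $y$, $x^2+y^2$ that appear in the remaining terms, each polynomial coefficient (in $x,y$ with $t$-dependent coefficients) must vanish separately. The coefficients of $b_x$ and $b_y$ force $F^1_t=0$, $c_2=0$, $F^2=F^3=0$; the coefficient of~$b$ then yields $c_1=0$; and the residual polynomial gives $F^4=0$. Hence $F^1$ is constant and $\mathfrak{g}^{\cap}=\langle D(1)\rangle=\langle\partial_t\rangle$.

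For the extensions, I would fix~$b$ and view~\eqref{eq:2DSWEsClassifyingEq} as a linear constraint on the nine parameters $F^1_t$, $F^1_{ttt}$, $F^2$, $F^2_{tt}$, $F^3$, $F^3_{tt}$, $F^4$, $c_1$, $c_2$, which can vary freely in~$t$ subject only to their own consistency. The method of furcate splitting proceeds by counting how many of these parameters contribute nontrivially for a given~$b$: each independent admissible combination produces a linear PDE on~$b$ of the form
\[
\mathcal{A}(x,y)\,b_x+\mathcal{B}(x,y)\,b_y+\mathcal{C}(x,y)\,b+\mathcal{D}(x,y)=0
\]
with low-degree polynomial coefficients. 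Successively increasing the number $k$ of linearly independent PDEs imposed on~$b$ produces a branching tree whose leaves are explicit admissible forms --- flat bottoms, linear slopes $b\propto x$, paraboloidal bottoms $b\propto r^2$, radial power bottoms $b\propto r^\alpha$, and angularly modulated bottoms of type $b\propto r^\alpha\cos(\beta\varphi+\nu)$ --- each yielding a specific extension of~$\mathfrak{g}^{\cap}$. Theorem~\ref{2d_gen_equiv} is then invoked to normalize the remaining free parameters in~$b$: space translations absorb the shift~$\delta_7$ and center the profile, rotations fix angular orientation, scalings fix the amplitude, and the two discrete equivalence transformations restrict signs and phases, producing the stated conventions $\alpha\ge0$, $\beta>0$, $\varepsilon=\pm1$ and $\delta\in\{0,1\}$.

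The principal obstacle is the combinatorial bookkeeping in the furcate splitting: one must traverse every branch of the case tree, verify that the cases labelled inequivalent are genuinely $G^\sim$-inequivalent, and recognize when two branches related by a nonobvious element of~$G^\sim$ must be identified. A secondary technical point is the explicit solution of the overdetermined linear PDE system on~$b$, which is most naturally handled in the polar coordinates $(r,\varphi)$ introduced above, since $D^{\rm s}$ and the angular part of~$J$ act as $r\partial_r$- and $\partial_\varphi$-type operators and therefore permit clean separation of the radial and angular dependencies.
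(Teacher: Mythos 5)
Your overall strategy is the one the paper follows: the kernel is obtained by splitting the classifying equation~\eqref{eq:2DSWEsClassifyingEq} with respect to an unconstrained~$b$, and the extensions are obtained by furcate splitting, i.e.\ by fixing~$t$ to generate template-form first-order PDEs on~$b$ and branching on the number~$k$ of linearly independent such equations, with $G^\sim$ from Theorem~\ref{2d_gen_equiv} used for normalization. (The paper itself defers the full proof of Theorem~\ref{thm:GroupClassificationOf2DSWEs1} to~\cite{bihl19a}, but displays the identical machinery for the conservation-law analogue in Section~\ref{sec:ClassificationProof} and Appendix~\ref{sec:DetailsOfProof}; the only structural difference is that the Lie-symmetry template form has one more free coefficient, since $2(F^1_t-c_1)$ decouples the coefficient of~$b$ from that of $xb_x+yb_y$.) Your kernel computation is correct and complete: $F^1_t=c_2=F^2=F^3=0$ from the coefficients of~$b_x$ and~$b_y$, then $c_1=0$ from the coefficient of~$b$, then $F^4=0$, giving $\mathfrak g^\cap=\langle D(1)\rangle$.

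The extension part, however, has genuine gaps beyond mere bookkeeping. First, you never bound~$k$ nor explain how to control consistency of the overdetermined system on~$b$: the paper's essential device is to associate to each template-form equation a vector field of the type~\eqref{eq:AssociatedVectorField} lying in a finite-dimensional Lie algebra~$\mathfrak a$, so that the closure condition $[\mathbf v_i,\mathbf v_{i'}]\in\langle\mathbf v_1,\dots,\mathbf v_k\rangle$ together with the rank condition on the coefficient matrix (here $\rank A_5=\rank A=k$, whence $k\leqslant5$) prunes the tree to finitely many tractable branches; without this the ``branching tree'' is not actually finite or traversable. Second, your anticipated list of leaves is inconsistent with the theorem you are proving: for small~$k$ the template-form equation is a single first-order PDE whose general solution contains an \emph{arbitrary function of one invariant variable}, which is exactly what produces Cases~\ref{LieSymCase1}--\ref{LieSymCase9} and parts of later cases ($b=r^\nu f(\varphi+\alpha\ln r)$, $b=f(r)+\delta\varphi$, $b=f(y)e^x$, $b=f(y)+\delta x$, \dots); your list of ``explicit admissible forms'' omits all of these families, and the form $b\propto r^\alpha\cos(\beta\varphi+\nu)$ you predict does not occur, whereas $b=\varepsilon r^\nu e^{\alpha\varphi}$ (Case~\ref{LieSymCase11}) does. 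This indicates the splitting has not actually been carried out, and the proposal as written would not reproduce the stated classification.
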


\begin{enumerate}\itemsep=1.3ex
\item
\label{LieSymCase1} $b=r^\nu f(\varphi+\alpha\ln r)$, \ $(\alpha,\nu)\ne(0,-2)$, \ $\nu\ne0$:\quad
$\mathfrak g_b=\big\langle D(1),\,4D(t)-(\nu+2)D^{\rm t}-2\alpha J\big\rangle$;
\item
\label{LieSymCase2} $b=f(\varphi+\alpha\ln r)+\nu\ln r$, \ $\nu\in\{-1,0,1\}\bmod G^\sim$:\quad
$\mathfrak g_b=\big\langle D(1),\,2D(t)-D^{\rm t}-\alpha J\big\rangle$;
\item
\label{LieSymCase3} $b=f(r)+\delta\varphi$:\quad
$\mathfrak g_b=\big\langle D(1),\,J\big\rangle$;
\item
\label{LieSymCase4} $b=f(r)e^{\beta\varphi}$:\quad
$\mathfrak g_b=\big\langle D(1),\,2J-\beta D^{\rm t}\big\rangle$;
\item
\label{LieSymCase5} $ b=f(y)e^x$:\quad
$\mathfrak g_b=\big\langle D(1),\,D^{\rm t}-P(2,0)\big\rangle$;
\item
\label{LieSymCase6}
\begin{enumerate}
\item \label{LieSymCase6a}
$b=r^{-2}f(\varphi)$:\quad
$\mathfrak g_b=\big\langle D(1),\,D(t),\,D(t^2)\big\rangle$;
\item
\label{LieSymCase6b}
$b=r^{-2}f(\varphi)+\frac12r^2$:\quad
$\mathfrak g_b=\big\langle D(1),\,D(e^{2t}),\,D(e^{-2t})\big\rangle$;
\item
\label{LieSymCase6c}
$b=r^{-2}f(\varphi)-\frac12r^2$:\quad
$\mathfrak g_b=\big\langle D(1),\,D(\cos 2t),\,D(\sin 2t)\big\rangle$;
\end{enumerate}
\item
\label{LieSymCase7} $b=f(y)+\delta x$:\quad
$\mathfrak g_b=\big\langle D(1),\,P(1,0),\,P(t,0)\big\rangle$;
\item
\label{LieSymCase8} $b=f(y)+\tfrac12x^2$:\quad
$\mathfrak g_b=\big\langle D(1),\,P(e^t,0),\,P(e^{-t},0)\big\rangle$;
\item
\label{LieSymCase9} $b=f(y)-\tfrac12x^2$:\quad
$\mathfrak g_b=\big\langle D(1),\,P(\cos t,0),\,P(\sin t,0)\big\rangle$;
\item
\label{LieSymCase10} $b=\delta\varphi-\nu\ln r$, $\nu=\pm1\bmod G^\sim$ if $\delta=0$:\quad
$\mathfrak g_b=\big\langle D(1),\,2D(t)-D^{\rm t},\,J\big\rangle$;
\item
\label{LieSymCase11} $b=\varepsilon r^{\nu}e^{\alpha\varphi}$,
$\nu\ne-2$, $(\alpha,\nu)\notin\{(0,0),(0,2)\}$:\ \
$\mathfrak g_b=\big\langle D(1),\,4D(t)-(\nu+2)D^{\rm t},\,2J-\alpha D^{\rm t}\big\rangle$;
\item
\label{LieSymCase12}
\begin{enumerate}
\item
\label{LieSymCase12a} $b=\varepsilon r^{-2}e^{\alpha\varphi}$:\quad
$\mathfrak g_b=\big\langle D(1),\,D(t),\,D(t^2),\,\alpha D^{\rm s}+4J\big\rangle$;
\item
\label{LieSymCase12b} $b=\varepsilon r^{-2}e^{\alpha\varphi}+\frac12r^2$:\quad
$\mathfrak g_b=\big\langle D(1),\,D(e^{2t}),\,D(e^{-2t}),\,\alpha D^{\rm s}+4J\big\rangle$;
\item
\label{LieSymCase12c} $b=\varepsilon r^{-2}e^{\alpha\varphi}-\frac12r^2$:\quad
$\mathfrak g_b=\big\langle D(1),\,D(\cos 2t),\,D(\sin 2t),\,\alpha D^{\rm s}+4J\big\rangle$;
\end{enumerate}
\item
\label{LieSymCase13} $b=\varepsilon|y|^{\nu}+\delta x$, $\nu\notin\{-2,0,2\}$:\\
$\mathfrak g_b=\big\langle D(1),\,4D(t)-(\nu+2)D^{\rm t}-\delta(\nu-1)P(t^2,0),\,P(1,0),\,P(t,0)\big\rangle$;
\item
\label{LieSymCase14} $b=\varepsilon\ln|y|+\delta x$:\quad
$\mathfrak g_b=\big\langle D(1),\,2D(t)-D^{\rm t}-\frac12\delta P(t^2,0)),\,P(1,0),\,P(t,0)\big\rangle;$
\item
\label{LieSymCase15} $b=\varepsilon e^y+\delta x$:\quad
$\mathfrak g_b=\big\langle D(1),\,D^{\rm t}-P(\delta t^2,2),\,P(1,0),\,P(t,0)\big\rangle;$
\item\label{LieSymCase16}
\begin{enumerate}
\item
\label{LieSymCase16a} $b=\varepsilon y^{-2}+\delta x$:\quad
$\mathfrak g_b=\big\langle D(1),\,D(t)+\frac34\delta P(t^2,0),\,D(t^2)+\frac12\delta P(t^3,0),\,P(1,0),\,P(t,0)\big\rangle;$
\item
\label{LieSymCase16b} $b=\varepsilon y^{-2}+\frac12r^2$:\quad
$\mathfrak g_b=\big\langle D(1),\,D(e^{2t}),\,D(e^{-2t}),\,P(e^t,0),\,P(e^{-t},0)\big\rangle$;
\item
\label{LieSymCase16c} $b=\varepsilon y^{-2}-\frac12r^2$:\quad
$\mathfrak g_b=\big\langle D(1),\,D(\cos 2t),\,D(\sin 2t),\,P(\cos t,0),\,P(\sin t,0)\big\rangle$;
\end{enumerate}
\item
\label{LieSymCase17} $b=\tfrac12x^2+\tfrac12\beta^2y^2$, $0<\beta<1$:\quad
$\mathfrak g_b=\big\langle D(1),\,D^{\rm s},\,P(e^t,0),\,P(e^{-t},0),\,P(0,e^{\beta t}),\,P(0,e^{-\beta t})\big\rangle$;
\item
\label{LieSymCase18} $b=\frac12x^2+\delta y$:\quad
$\mathfrak g_b=\big\langle D(1),\,D^{\rm s}-\tfrac12\delta P(0,t^2),\,P(e^t,0),\,P(e^{-t},0),\,P(0,1),\,P(0,t)\big\rangle$;
\item
\label{LieSymCase19} $b=\tfrac12x^2-\tfrac12\beta^2y^2$, $\beta>0$:\quad
$\mathfrak g_b=\big\langle D(1),\,D^{\rm s},\,P(e^t,0),\,P(e^{-t},0),\,P(0,\cos\beta t),\,P(0,\sin\beta t)\big\rangle$;
\item
\label{LieSymCase20} $b=-\frac12x^2+\delta y$:\quad
$\mathfrak g_b=\big\langle D(1),\,D^{\rm s}-\tfrac12\delta P(0,t^2),\,P(\cos t,0),\,P(\sin t,0),\,P(0,1),\,P(0,t)\big\rangle$;
\item
\label{LieSymCase21} $b=-\tfrac12x^2-\tfrac12\beta^2y^2$, $0<\beta<1$:\\[2pt]
$\mathfrak g_b=\big\langle D(1),\,D^{\rm s},\,P(\cos t,0),\,P(\sin t,0),\,P(0,\cos\beta t),\,P(0,\sin\beta t)\big\rangle$;
\item
\label{LieSymCase22}
\begin{enumerate}
\item
\label{LieSymCase22a} $b=0$:\quad
$\mathfrak g_b=\big\langle D(1),\,D(t),\,D(t^2),\,D^{\rm s},\,J,\,P(1,0),\,P(t,0),\,P(0,1),\,P(0,t)\big\rangle$;
\item
\label{LieSymCase22b} $b=x$:\quad
$\mathfrak g_b=\big\langle D(1),\,D(t)+\tfrac34P(t^2,0),\,D(t^2)+\tfrac12P(t^3,0),\,D^{\rm s}-\tfrac12P(t^2,0),\,J-\tfrac12P(0,t^2)$,\\[.5ex]
$\phantom{\mbox{$b=x$:\quad}\mathfrak g_b=\big\langle}P(1,0),\,P(t,0),\,P(0,1),\,P(0,t)\big\rangle$;
\item
\label{LieSymCase22c} $b=\frac12r^2$:\ \
$\mathfrak g_b=\big\langle D(1),\,D(e^{2t}),\,D(e^{-2t}),\,D^{\rm s},\,J,\,P(e^t,0),\,P(e^{-t},0),\,P(0,e^t),\,P(0,e^{-t})\big\rangle$;\!
\item
\label{LieSymCase22d} $b=-\frac12r^2$:\quad
$\mathfrak g_b=\big\langle D(1),\,D(\cos2t),\,D(\sin2t),\,D^{\rm s},\,J$,\\[.5ex]
$\phantom{\mbox{$b=-\frac12r^2$:\quad}\mathfrak g_b=\big\langle}P(\cos t,0),\,P(\sin t,0),\,P(0,\cos t),\,P(0,\sin t)\big\rangle$.
\end{enumerate}
\end{enumerate}

In~\cite{bihl19a}, we showed that the class~\eqref{eq:2DSWEs} is not semi-normalized;
see \cite{bihl11Dy,kuru18a,opan17a,popo06Ay,popo10Ay} for definitions.
Analyzing the structure of the maximal Lie invariance algebras of systems from the class~\eqref{eq:2DSWEs}, 
we found three families of 
$G^\sim$-inequivalent non-identity admissible transformations of the class~\eqref{eq:2DSWEs}
that are independent up to inversion, composing with each other 
and with admissible transformations generated by point symmetries of systems from this class,
{\renewcommand{\labelenumi}{$\mathcal T^\theenumi$:}
\begin{enumerate}\itemsep=1.5ex
\item
$b=r^{-2}f(\varphi)-\frac12r^2$, \ $\tilde b=\tilde r^{-2}f(\tilde\varphi)$,\\[1ex]
$\tilde t=\tan t$, \ $\tilde x=\dfrac x{\cos t}$, \ $\tilde y=\dfrac y{\cos t}$, \ $\tilde u=u\cos t+x\sin t$, \ $\tilde v=v\cos t+y\sin t$, \ $\tilde h=h\cos^2 t$;
\item
$b=r^{-2}f(\varphi)+\frac12r^2$, \ $\tilde b=\tilde r^{-2}f(\tilde\varphi)$,\\[1ex]
$\tilde t=\tfrac12e^{2t}$, \ $\tilde x=e^tx$, \ $\tilde y=e^ty$, \ $\tilde u=e^{-t}\left( u+x\right)$, \ $\tilde v=e^{-t}\left( v+y\right)$, \ $\tilde h=e^{-2t}h$;
\item
$b=f(y)+x$, \ $\tilde b=f(\tilde y)$,\\[1ex]
$\tilde t=t$, \ $\tilde x=x+\tfrac12t^2$, \ $\tilde y=y$, \ $\tilde u=u+t$, \ $\tilde v=v$, \ $\tilde h=h$.
\end{enumerate}
These families of admissible transformations induce additional equivalences among classification cases of Theorem~\ref{thm:GroupClassificationOf2DSWEs1},
\begin{enumerate}\itemsep=0ex\itemindent=2em
\item
\ref{LieSymCase6c}  $\to$ \ref{LieSymCase6a},  \
\ref{LieSymCase12c} $\to$ \ref{LieSymCase12a}, \
\ref{LieSymCase16b} $\to$ \ref{LieSymCase16a}$_{\delta=0}$, \
\ref{LieSymCase22d} $\to$ \ref{LieSymCase22a};
\item
\ref{LieSymCase6b}  $\to$ \ref{LieSymCase6a},  \
\ref{LieSymCase12b} $\to$ \ref{LieSymCase12a}, \
\ref{LieSymCase16c} $\to$ \ref{LieSymCase16a}$_{\delta=0}$, \
\ref{LieSymCase22c} $\to$ \ref{LieSymCase22a};
\item
\ref{LieSymCase22b} $\to$ \ref{LieSymCase22a}, \quad
\ref{LieSymCase7}, \ref{LieSymCase13}, \ref{LieSymCase14}, \ref{LieSymCase15}, \ref{LieSymCase16a}, \ref{LieSymCase18}, \ref{LieSymCase20}:%
\footnote{\label{fnt:OnModificationOfAdmTrans}%
In Cases~\ref{LieSymCase18} and~\ref{LieSymCase20}, instead of the pure transformational part of $\mathcal T^3$
one should use its composition with
the permutation $(x,u)\leftrightarrow(y,v)$, 
which is associated with a discrete equivalence transformations of the class~\eqref{eq:2DSWEs}.
}
\ $\delta=1$ $\to$ $\delta=0$.
\end{enumerate}
}

\begin{remark}\label{rem:Two-LevelNumerationOfClassificationCases}
The presence of additional equivalences among classification cases listed in Theorem~\ref{thm:GroupClassificationOf2DSWEs1}
justifies the usage of two-level numeration for them: 
numbers with the same Arabic numerals and different Roman letters
correspond to cases that are equivalent with respect to additional equivalence transformations. 
\end{remark}

\begin{remark}\label{rem:LieSymsComparisonWithEulerEqs}
Lie-symmetry properties of the shallow water equations differ from those of the (incompressible) Euler equations. 
Whereas the maximal Lie invariance algebra of the Euler equations is infinite-dimensional%
\footnote{%
The maximal Lie invariance algebra~$\mathfrak g_{\rm EEs}^{\rm 3D}$ 
of the Euler equations $\mathbf u_t+(\mathbf u\cdot\nabla)\mathbf u+\nabla p=0$, $\nabla\cdot\mathbf u=0$
in space dimension three 
was first computed in~\cite{buch1971a}, and this computation was enhanced in~\cite{khab76a}.
See also \cite[Example~2.45]{olve93a} and \cite[Section~1.1.11]{andr98A} for accessible presentations of these results. 
For the standard representation of the two-dimensional Euler equations in terms of the velocity~$\mathbf u$ and the pressure~$p$, 
the maximal Lie invariance algebra is completely analogous to~$\mathfrak g_{\rm EEs}^{\rm 3D}$, 
whereas the representation with excluded pressure additionally admits, as its Lie symmetries, 
the time-dependent rotations with constant angular velocities~\cite[Section~1.2.2]{andr98A}. 
These additional Lie symmetries are not relevant for the shallow water equations of the form~\eqref{eq:2DSWEs}.
Note that the maximal Lie invariance algebra of any variant of the (incompressible) Navier--Stokes equations 
is a codimension-one subalgebra of its counterpart for zero viscosity coefficient, 
where two independent vector fields associated with scaling transformations 
are linearly combined into a single one, 
see~\cite{byte1972a} for space dimension three. 
} 
the algebra~$\mathfrak g_b$ is finite-dimensional for any bottom topography~$b$. 
The Euler equations admit generalized shifts of the space variables that arbitrarily depends on~$t$, 
and systems from the class~\eqref{eq:2DSWEs} are invariant 
with respect to at most four independent generalized shifts of the space variables 
with a specific (affine, exponential or trigonometric) dependence on~$t$.
At the same time, some systems from the class~\eqref{eq:2DSWEs} admits point symmetry transformations 
with more complicated $t$-components than those for the Euler equations, 
which are merely affine~in~$t$.
\end{remark}

\section{Preliminary analysis and classification result}\label{result_section}\label{sec:PreliminaryAnalysisOf0thOrderCLsOfSWEs}

Suppose that for a fixed value of the arbitrary element~$b$, 
a tuple~$\gamma=(\gamma^1,\gamma^2,\gamma^3)^{\mathsf T}$ of sufficiently smooth functions of $(t,x,y,u,v,h)$
is a characteristic of a conservation law of the system~$\mathcal L_b$.
A~conservation law of~$\mathcal L_b$ is of order zero if and only if it admits a characteristic of this kind. 
By~${\rm Ch}^0_b$ we denote the linear space constituted by such tuples. 
Then the tuple~$\gamma$ is a cosymmetry of~$\mathcal L_b$, 
which is equivalent to satisfying the condition \cite[Eq.~(5.83)]{olve93a} 
\begin{gather}\label{eq:CosymCondition}
(\mathsf D_{\mathcal L_b}^\dag\gamma)\,\big|_{\mathcal L_b}=0,
\end{gather}
where $\mathsf D_{\mathcal L_b}^\dag$ is the adjoint of the Fr\'echet derivative of the left-hand side of the system~$\mathcal L_b$, 
\begin{gather*}
\mathsf D_{\mathcal L_b}^\dag=-
\begin{pmatrix}
\mathrm D_t+u\mathrm D_x+v\mathrm D_y+v_y & -v_x & h\mathrm D_x\\
-u_y & \mathrm D_t+u\mathrm D_x+v\mathrm D_y+u_x & h\mathrm D_y\\
\mathrm D_x & \mathrm D_y & \mathrm D_t+u\mathrm D_x+v\mathrm D_y
\end{pmatrix},
\end{gather*}
with $\mathrm D_t$, $\mathrm D_x$ and $\mathrm D_y$ denoting the total derivative operators with respect to~$t$, $x$ and~$y$, 
respectively. 
Since $\mathcal L_b$ is a system of evolution equations, 
it is natural to assume the derivatives $(u_t,v_t,h_t)$ as the leading ones 
and express them in terms of other (parametric) derivatives in view of the system~$\mathcal L_b$, 
\begin{gather*}
u_t=-uu_x-vu_y-h_x+b_x,\\
v_t=-uv_x-vv_y-h_y+b_y,\\
h_t=-(uh)_x-(vh)_y.
\end{gather*}
To get the system of determining equations for the components of the tuple~$\gamma$,
we first confine the expanded expression for $\mathsf D_{\mathcal L_b}^\dag\gamma$ in the condition~\eqref{eq:CosymCondition}
to the manifold defined by $\mathcal L_b$ in the jet space $\mathrm J^1(t,x,y|u,v,h)$
via substituting the above expressions for the leading derivatives.
The derived equations are then split with respect to 
the first-order (parametric) derivatives of the dependent variables $(u,v,h)$ with respect to~$x$ and~$y$.
We additionally rearrange the equations obtained after the splitting 
and exclude those of them that are differential consequences of the others,
which finally gives the system 
\begin{subequations}\label{eq:DetEqsForChars}
\begin{gather}
\label{eq:DetEqsForChars1}
\gamma^1_v=\gamma^2_u=0,\quad \gamma^1_u=\gamma^2_v=h\gamma^3_h,\quad \gamma^1_h=\gamma^3_u,\quad \gamma^2_h=\gamma^3_v,\quad h\gamma^1_h=\gamma^1,\quad h\gamma^2_h=\gamma^2,\\
\label{eq:DetEqsForChars2}
\gamma^1_t+u\gamma^1_x+v\gamma^1_y+b_x\gamma^1_u+h\gamma^3_x=0,\\
\label{eq:DetEqsForChars3}
\gamma^2_t+u\gamma^2_x+v\gamma^2_y+b_y\gamma^2_v+h\gamma^3_y=0,\\
\label{eq:DetEqsForChars4}
\gamma^3_t+u\gamma^3_x+v\gamma^3_y+b_x\gamma^3_u+b_y\gamma^3_v+\gamma^1_x+\gamma^2_y=0.
\end{gather}
\end{subequations}
The equations $\gamma^1_v=\gamma^2_u$, $\gamma^1_h=\gamma^3_u$ and $\gamma^2_h=\gamma^3_v$ imply 
that the Fr\'echet derivative of~$\gamma$ is a formally self-adjoint operator, 
which is a necessary and sufficient condition for a cosymmetry of an evolution system 
to be a conservation-law characteristic of this system.
In other words, the space of nonpositive-order cosymmetries of~$\mathcal L_b$ coincides with~${\rm Ch}^0_b$. 

The general solution of the subsystem~\eqref{eq:DetEqsForChars1} can be represented in the form 
\begin{gather*}
\gamma^1=-2F^1uh+\gamma^{10}h, \\
\gamma^2=-2F^1vh+\gamma^{20}h, \\
\gamma^3=-F^1(u^2+v^2+2h)+\gamma^{10}u+\gamma^{20}v+\gamma^{30},
\end{gather*}
where the coefficients $F^1$, $\gamma^{10}$, $\gamma^{20}$ and~$\gamma^{30}$ depend at most on $(t,x,y)$.
We substitute this representation into the equations~\eqref{eq:DetEqsForChars2}--\eqref{eq:DetEqsForChars4} 
and split the resulting equations with respect to~$(u,v,h)$ to derive a system for representation's coefficients, 
\begin{subequations}\label{eq:DetEqsForCharCoeffs}
\begin{gather}
\label{eq:DetEqsForCharCoeffs1}
F^1_x=F^1_y=0,\quad \gamma^{10}_x=\gamma^{20}_y=F^1_t,\quad \gamma^{10}_y+\gamma^{20}_x=0,\\
\label{eq:DetEqsForCharCoeffs2}
\gamma^{30}_x=2F^1b_x-\gamma^{10}_t,\quad 
\gamma^{30}_y=2F^1b_y-\gamma^{20}_t,\\
\label{eq:DetEqsForCharCoeffs3}
\gamma^{10}b_x+\gamma^{20}b_y+\gamma^{30}_t=0.
\end{gather}
\end{subequations}
Differential consequences of~\eqref{eq:DetEqsForCharCoeffs1}--\eqref{eq:DetEqsForCharCoeffs2} are 
the equations $\gamma^{10}_{tx}=\gamma^{20}_{ty}=0$ and $\gamma^{i0}_{xx}=\gamma^{i0}_{xy}=\gamma^{i0}_{yy}=0$, $i=1,2$.
Thus, the subsystem~\eqref{eq:DetEqsForCharCoeffs1}--\eqref{eq:DetEqsForCharCoeffs2} 
integrates to 
\begin{gather*}
\gamma^{10}=F^1_tx+c_1y+F^2,\\ 
\gamma^{20}=-c_1x+F^1_ty+F^3,\\
\gamma^{30}=2F^1b-F^1_{tt}\frac{x^2+y^2}{2}-F^2_tx-F^3_ty+F^4, 
\end{gather*}
where $F^i$, $i=1,2,3,4$, are sufficiently smooth functions of $t$, and $c_1$ is a constant. 
This results in a refined form of the components of the tuple~$\gamma$, 
\begin{gather}\label{eq:FormOfChars}
\begin{split}
\gamma^1={}&(-2F^1u+F^1_tx+c_1y+F^2)h,\\
\gamma^2={}&(-2F^1v-c_1x+F^1_ty+F^3)h,\\
\gamma^3={}&-F^1(u^2+v^2+2h)+(F^1_tx+c_1y+F^2)u+(-c_1x+F^1_ty+F^3)v\\
&+2F^1b-F^1_{tt}\frac{x^2+y^2}2-F^2_tx-F^3_ty+F^4,
\end{split}
\end{gather}
where, as above, $F^i$, $i=1,2,3,4$, are sufficiently smooth functions of $t$, and $c_1$ is a constant.
In~other words, for any~$b$
\[
{\rm Ch}^0_b\subset{\rm Ch}^0_\spanindex:=\langle \Lambda^0,\,\Lambda^1_b(F^1),\,\Lambda^2(F^2),\,\Lambda^3(F^3),\,\Lambda^4(F^4)\rangle,
\]
where the parameters~$F^1$, $F^2$, $F^3$ and~$F^4$ run through the set of smooth functions of~$t$,
\begin{gather}\label{eq:SpanningCanditatesForChars}
\begin{split}
&\Lambda^0:=\big(-yh,\,xh,\,xv-yu\big),\\
&\Lambda^1_b(F^1):=\big(-2F^1uh+F^1_txh,\,-2F^1vh+F^1_tyh,\\
&\phantom{\Lambda^1_b(F^1):=\big(}-F^1(u^2+v^2+2h)+F^1_t(xu+yv)+2F^1b-\tfrac12F^1_{tt}(x^2+y^2)\big),\\
&\Lambda^2(F^2):=\big(F^2h,\,0,\,F^2u-F^2_tx\big),\\
&\Lambda^3(F^3):=\big(0,\,F^3h,\,F^3v-F^3_ty\big),\\
&\Lambda^4(F^4):=\big(0,\,0,\,F^4\big).
\end{split}
\end{gather}

For elements of~${\rm Ch}^0_b$, 
the parameters~$F^1$, $F^2$, $F^3$, $F^4$ and~$c_1$ additionally satisfy the equation implied by~\eqref{eq:DetEqsForCharCoeffs3},  
\begin{gather}\label{eq:2DSWEsClassifyingEqFor0thOrderCLs}
\begin{split}
&(F^1_tx+c_1y+F^2)b_x+(-c_1x+F^1_ty+F^3)b_y+2F^1_tb\\
&\qquad-F^1_{ttt}\frac{x^2+y^2}{2}-F^2_{tt}x-F^3_{tt}y-F^4_t=0.
\end{split}
\end{gather}
The equation~\eqref{eq:2DSWEsClassifyingEqFor0thOrderCLs} is the only classifying equation for 
nonpositive-order conservation-law characteristics of systems from the class~\eqref{eq:2DSWEs}.
Thus, the problem of classification of such characteristics up to the $G^\sim$-equivalence
reduces to solving the equation~\eqref{eq:2DSWEsClassifyingEqFor0thOrderCLs} up to the same equivalence
with respect to the arbitrary element $b$ and the parameters~$F^1$, \dots, $F^4$ and~$c_1$.

A zeroth-order conserved current~$\mathcal C=(\mathcal C^1,\mathcal C^2,\mathcal C^3)$ associated with the characteristic 
\[\gamma=c_1\Lambda^0+\Lambda^1_b(F^1)+\Lambda^2(F^2)+\Lambda^3(F^3)+\Lambda^4(F^4)\] 
can be computed directly from 
the characteristic form of the condition for conserved currents, 
\begin{gather*}
\mathrm D_t\mathcal C^1+\mathrm D_x\mathcal C^2+\mathrm D_y\mathcal C^3
=\gamma^1(u_t+uu_x+vu_y+h_x-b_x)+\gamma^2(v_t+uv_x+vv_y+h_y-b_y)\\
\qquad{}+\gamma^3(h_t+(uh)_x+(vh)_y).
\end{gather*}
``Integrating by parts'' in the right-hand side of the last equality 
and taking into account the classifying equation~\eqref{eq:2DSWEsClassifyingEqFor0thOrderCLs}, 
we obtain that up to adding a trivial conserved current 
\begin{gather}\label{eq:RepresentationOfGenCCs}
\mathcal C=c_1\mathcal F^0+\mathcal F^1_b(F^1)+\mathcal F^2(F^3)+\mathcal F^3(F^3)+\mathcal F^4(F^4)
\end{gather}
with
\begin{gather*}%\label{eq:RepresentationOfCCs}
\begin{split}
\mathcal F^0
:={}&\big((xv-yu)h,\,(xv-yu)uh-\tfrac12yh^2,\,(xv-yu)uh+\tfrac12xh^2\big),\\ 
 ={}&(xv-yu)\big(1,u,v\big)+\tfrac12h^2(0,-y,x),
\end{split}
\\[1ex]
\begin{split}
\mathcal F^1_b(F^1)
:={}&\big(
-F^1(u^2+v^2+h-2b)h+F^1_t(xu+yv)h-\tfrac12F^1_{tt}(x^2+y^2)h,\,\\&\phantom{\big(}
-F^1(u^2+v^2+2h-2b)uh+F^1_t((xu+yv)u+\tfrac12xh)h-\tfrac12F^1_{tt}(x^2+y^2)uh,\,\\&\phantom{\big(}
-F^1(u^2+v^2+2h-2b)vh+F^1_t((xu+yv)v+\tfrac12yh)h-\tfrac12F^1_{tt}(x^2+y^2)vh\big),\\
 ={}&\big(-F^1(u^2+v^2+2h-2b)+F^1_t(xu+yv)h-\tfrac12F^1_{tt}(x^2+y^2)\big)\big(1,u,v\big)\\
    &+F^1h^2(1,0,0)+\tfrac12F^1_th^2(0,x,y),
\end{split}
\\[1ex]
\begin{split}
\mathcal F^2(F^2)
:={}&\big((F^2u-F^2_tx)h,\,(F^2u-F^2_tx)uh+\tfrac12F^2h^2,\,(F^2u-F^2_tx)vh\big),\\ 
 ={}&(F^2u-F^2_tx)h\big(1,u,v\big)+\tfrac12F^2h^2(0,1,0),
\end{split}
\\[1ex]
\begin{split}
\mathcal F^3(F^3)
:={}&\big((F^3v-F^3_ty)h,\,(F^3v-F^3_ty)uh,\,(F^3v-F^3_ty)vh+\tfrac12F^3h^2\big)\\
 ={}&(F^3v-F^3_ty)h\big(1,u,v\big)+\tfrac12F^3h^2(0,0,1),
\end{split}
\\[1ex]
\mathcal F^4(F^4):=\big(F^4h,\,F^4uh,\,F^4vh\big)=F^4h\big(1,u,v\big). 
\end{gather*}

Recall that $r:=\sqrt{x^2+y^2}$ and $\varphi:=\arctan(y/x)$ are 
the polar coordinates~$(r,\varphi)$ on the $(x,y)$-plane.

\begin{theorem}\label{thm:ClassificationOf0thOrderCLsOf2DSWEs1}
In the notation~\eqref{eq:SpanningCanditatesForChars}, for any value of the arbitrary element~$b$, 
the system~$\mathcal L_b$ admits two linearly independent conservation laws with characteristics~$\Lambda^1_b(1)$ and~$\Lambda^4(1)$, 
${\rm Ch}^0_b\supseteq\langle\Lambda^4(1),\,\Lambda^1_b(1)\rangle$.
A~complete list of $G^\sim$-inequivalent extensions of the spaces of nonpositive-order conservation-law characteristics,
${\rm Ch}^0_b$, within the class~\eqref{eq:2DSWEs}
is exhausted by the following cases,
where $f$ denotes an arbitrary smooth function of a single argument,
$\alpha$,~$\beta$, $\mu$ and~$\nu$ are arbitrary constants with $\alpha\geqslant0\bmod G^\sim$, $\beta>0$
and additional constraints indicated in the corresponding cases,
$\varepsilon=\pm1\bmod G^\sim$ and $\delta\in\{0,1\}\bmod G^\sim$.
\end{theorem}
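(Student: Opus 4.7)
The plan is to classify, up to the action of the equivalence group~$G^\sim$ from Theorem~\ref{2d_gen_equiv}, all solutions of the classifying equation~\eqref{eq:2DSWEsClassifyingEqFor0thOrderCLs} jointly in the arbitrary element $b=b(x,y)$ and in the $t$-dependent parameters $(F^1,F^2,F^3,F^4)$ together with the constant~$c_1$. Since every tuple $\gamma\in{\rm Ch}^0_b$ has the canonical form~\eqref{eq:FormOfChars}, the classification of nonpositive-order conservation-law characteristics of systems from the class~\eqref{eq:2DSWEs} reduces entirely to the analysis of~\eqref{eq:2DSWEsClassifyingEqFor0thOrderCLs}, with the final list of canonical characteristics subsequently expressed in the notation~\eqref{eq:SpanningCanditatesForChars}.

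I would then apply the optimized \emph{method of furcate splitting}, essentially paralleling the pattern already established in~\cite{bihl19a} for the group classification of the same class. Viewing~\eqref{eq:2DSWEsClassifyingEqFor0thOrderCLs} as a linear relation in $b$, $b_x$, $b_y$ with coefficients that depend on the values of $(F^1,F^1_t,F^1_{tt},F^1_{ttt},F^2,F^2_{tt},F^3,F^3_{tt},F^4_t,c_1)$ at fixed moments of time, one substitutes successive $t$-values into the equation and extracts a finite family of linear PDEs on $b$ alone. The dimension of the resulting extension of the always-present subspace $\langle\Lambda^1_b(1),\Lambda^4(1)\rangle$ coincides with the number of linearly independent such PDEs that the procedure produces. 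The argument is organized inductively: assume first that exactly one independent furcation occurs beyond the kernel, then two, then three, and so on, and at each stage use $G^\sim$ together with the ODEs extracted from the $t$-dependence to bring $(F^1,F^2,F^3,F^4)$ to canonical form while simultaneously constraining $b$ to one of the families listed in the statement. The detailed furcation bookkeeping is relegated to Section~\ref{sec:DetailsOfProof}.

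It is helpful to observe the close structural parallel between~\eqref{eq:2DSWEsClassifyingEqFor0thOrderCLs} and the Lie-symmetry classifying equation~\eqref{eq:2DSWEsClassifyingEq}: these two equations differ only in the absence of the term $-2c_1b$ on the left-hand side of~\eqref{eq:2DSWEsClassifyingEqFor0thOrderCLs} and in the replacement of $F^4$ by $F^4_t$ in the scalar summand. The former asymmetry means that no conservation-law analog exists of the Lie scaling parameter that couples to~$b$; the latter yields an additional degree of freedom (the constant part of $F^4$) that accounts for the always-present characteristic $\Lambda^4(1)$. One should therefore expect, and the furcate splitting confirms, that the admissible bottom topographies essentially coincide with those in Theorem~\ref{thm:GroupClassificationOf2DSWEs1}, even though the dimension of each ${\rm Ch}^0_b$ differs case by case from the dimension of the corresponding $\mathfrak g_b$.

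The main obstacle will be the combinatorial bookkeeping in the furcate-splitting enumeration: numerous branches must be treated, several collapse under the $G^\sim$-action, and one must avoid both missed cases and duplicates. A secondary subtlety is that the two asymmetries described above prevent a naive case-by-case transfer from Theorem~\ref{thm:GroupClassificationOf2DSWEs1}, so each canonical topography must be verified independently by substituting the parametric ansatz~\eqref{eq:FormOfChars} into~\eqref{eq:2DSWEsClassifyingEqFor0thOrderCLs} and reading off the allowed values of $F^1,F^2,F^3,F^4$ and $c_1$. Exhaustiveness is finally established by confirming that for any $b$ outside the listed canonical families ${\rm Ch}^0_b$ reduces to $\langle\Lambda^1_b(1),\Lambda^4(1)\rangle$.
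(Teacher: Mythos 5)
Your overall strategy is the one the paper actually follows: reduce everything to the single classifying equation~\eqref{eq:2DSWEsClassifyingEqFor0thOrderCLs}, apply furcate splitting by evaluating at fixed times to obtain a system of template-form equations~\eqref{eq:2DSWEsSysOfTemplateFormEqs} in $b$ alone, branch on the number $k$ of linearly independent such equations (with $0\leqslant k\leqslant 4$ forced by consistency), gauge the coefficients by $G^\sim$ and by linear recombination, integrate for~$b$, and finally solve the residual ODE system for $F^1,\dots,F^4$ and~$c_1$ for each canonical~$b$. However, two of the assertions embedded in your sketch are false, and each would corrupt the classification if used as a shortcut rather than rechecked.

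First, the dimension of the extension of $\langle\Lambda^4(1),\Lambda^1_b(1)\rangle$ does \emph{not} coincide with the number~$k$ of independent template-form equations. For $b=r^{-2}f(\varphi)$ with generic~$f$ one has $k=1$ (the single equation $xb_x+yb_y+2b=0$), yet the extension is two-dimensional, spanned by $\Lambda^1_b(t)$ and $\Lambda^1_b(t^2)$ (Case~\ref{CLsCase3a} of Theorem~\ref{thm:ClassificationOf0thOrderCLsOf2DSWEs1}); the dimension is only determined after integrating the ODEs on the parameter functions that the splitting produces, e.g.\ $F^1_{ttt}=-a^1_5F^1_t$ here. Second, the admissible bottom topographies do \emph{not} ``essentially coincide'' with those of Theorem~\ref{thm:GroupClassificationOf2DSWEs1}: they form a proper subfamily. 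Precisely because~\eqref{eq:2DSWEsClassifyingEqFor0thOrderCLs} lacks the $-2c_1b$ term of~\eqref{eq:2DSWEsClassifyingEq}, the template form~\eqref{eq:TemplateFormForClassificationOfZeroth-orderCLs} couples the coefficient of~$b$ rigidly to that of $xb_x+yb_y$, so every Lie-symmetry extension whose extending vector fields involve~$D^{\rm t}$ (Cases~\ref{LieSymCase1}$_{\nu\ne-2}$, \ref{LieSymCase2}, \ref{LieSymCase4}, \ref{LieSymCase5}, \ref{LieSymCase11}, \ref{LieSymCase13}--\ref{LieSymCase15}, among others, of Theorem~\ref{thm:GroupClassificationOf2DSWEs1}) yields no conservation-law extension whatsoever; this is exactly the Noether-type obstruction the paper later explains via the Hamiltonian operator~$\mathfrak H$. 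Your own caveat that each canonical topography must be verified independently by substituting~\eqref{eq:FormOfChars} into~\eqref{eq:2DSWEsClassifyingEqFor0thOrderCLs} is the correct safeguard; just do not let the stated expectation steer the enumeration, or you will both miscount dimensions and list spurious cases.
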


\begin{enumerate}\itemsep=1.5ex
\item
\label{CLsCase1} $b=r^{-2}f(\varphi+\beta\ln r)$:\quad
${\rm Ch}^0_b=\big\langle\Lambda^4(1),\,\Lambda^1_b(1),\,\Lambda^1_b(t)+\beta\Lambda^0\big\rangle$;
\item
\label{CLsCase2} $b=f(r)+\delta\varphi:$\quad
${\rm Ch}^0_b=\big\langle\Lambda^4(1),\,\Lambda^1_b(1),\,\Lambda^0+\delta\Lambda^4(t)\big\rangle$;
\item
\label{CLsCase3}
\begin{enumerate}\itemsep=.5ex
\item 
\label{CLsCase3a} $b=r^{-2}f(\varphi)$:\quad
${\rm Ch}^0_b=\big\langle\Lambda^4(1),\,\Lambda^1_b(1),\,\Lambda^1_b(t),\,\Lambda^1_b(t^2)\big\rangle$;
\item
\label{CLsCase3b} $b=r^{-2}f(\varphi)+\frac12r^2$:\quad
${\rm Ch}^0_b=\big\langle\Lambda^4(1),\,\Lambda^1_b(1),\,\Lambda^1_b(e^{2t}),\,\Lambda^1_b(e^{-2t})\big\rangle$;
\item
\label{CLsCase3c} $b=r^{-2}f(\varphi)-\frac12r^2$:\quad
${\rm Ch}^0_b=\big\langle\Lambda^4(1),\,\Lambda^1_b(1),\,\Lambda^1_b(\cos 2t),\,\Lambda^1_b(\sin 2t)\big\rangle$;
\end{enumerate}
\item
\label{CLsCase4} $b=f(y)+\delta x$:\quad
${\rm Ch}^0_b=\big\langle\Lambda^4(1),\,\Lambda^1_b(1),\,\Lambda^2(1),\,\Lambda^2(t)\big\rangle$;
\item
\label{CLsCase5} $b=f(y)+\tfrac12x^2$:\quad
${\rm Ch}^0_b=\big\langle\Lambda^4(1),\,\Lambda^1_b(1),\,\Lambda^2(e^t),\,\Lambda^2(e^{-t})\big\rangle$;
\item
\label{CLsCase6}$b=f(y)-\tfrac12x^2$:\quad
${\rm Ch}^0_b=\big\langle\Lambda^4(1),\,\Lambda^1_b(1),\,\Lambda^2(\cos t),\,\Lambda^2(\sin t)\big\rangle$;
\item
\label{CLsCase7} 
\begin{enumerate}\itemsep=.5ex
\item
\label{CLsCase7a} $b=\varepsilon r^{-2}$:\quad
${\rm Ch}^0_b=\big\langle\Lambda^4(1),\,\Lambda^1_b(1),\,\Lambda^1_b(t),\,\Lambda^1_b(t^2),\,\Lambda^0\big\rangle$;
\item
\label{CLsCase7b} $b=\varepsilon r^{-2}+\frac12r^2$:\quad
${\rm Ch}^0_b=\big\langle\Lambda^4(1),\,\Lambda^1_b(1),\,\Lambda^1_b(e^{2t}),\,\Lambda^1_b(e^{-2t}),\,\Lambda^0\big\rangle$;
\item
\label{CLsCase7c} $b=\varepsilon r^{-2}-\frac12r^2$:\quad
${\rm Ch}^0_b=\big\langle\Lambda^4(1),\,\Lambda^1_b(1),\,\Lambda^1_b(\cos 2t),\,\Lambda^1_b(\sin 2t),\,\Lambda^0\big\rangle$;
\end{enumerate}
\item
\label{CLsCase8}
\begin{enumerate}\itemsep=.5ex
\item
\label{CLsCase8a} $b=\varepsilon y^{-2}+\delta x$:\quad
${\rm Ch}^0_b=\big\langle\Lambda^4(1),\,\Lambda^1_b(1),\,
\Lambda^1_b(t)+\frac32\delta\Lambda^2(t^2)+\frac12\delta^2\Lambda^4(t^3)$,\\[.5ex]
${}$\qquad\qquad$\Lambda^1_b(t^2)+\delta\Lambda^2(t^3)+\frac14\delta^2\Lambda^4(t^4),\,\Lambda^2(1)+\delta\Lambda^4(t),\,\Lambda^2(t)+\frac12\delta\Lambda^4(t^2)\big\rangle;$
\item
\label{CLsCase8b} $b=\varepsilon y^{-2}+\frac12r^2$:\quad
${\rm Ch}^0_b=\big\langle\Lambda^4(1),\,\Lambda^1_b(1),\,\Lambda^1_b(e^{2t}),\,\Lambda^1_b(e^{-2t}),\,\Lambda^2(e^t),\,\Lambda^2(e^{-t})\big\rangle$;
\item
\label{CLsCase8c} $b=\varepsilon y^{-2}-\frac12r^2$:\quad
${\rm Ch}^0_b=\big\langle\Lambda^4(1),\,\Lambda^1_b(1),\,\Lambda^1_b(\cos 2t),\,\Lambda^1_b(\sin 2t),\,\Lambda^2(\cos t),\,\Lambda^2(\sin t)\big\rangle$;
\end{enumerate}
\item
\label{CLsCase9} $b=\tfrac12x^2+\tfrac12\beta^2y^2$, \ $0<\beta<1$:\quad
${\rm Ch}^0_b=\big\langle\Lambda^4(1),\,\Lambda^1_b(1),\,\Lambda^2(e^t),\,\Lambda^2(e^{-t}),\,\Lambda^3(e^{\beta t}),\,\Lambda^3(e^{-\beta t})\big\rangle$;
\item
\label{CLsCase10} $b=\frac12x^2+\delta y$:\quad
${\rm Ch}^0_b=\big\langle\Lambda^4(1),\,\Lambda^1_b(1),\,\Lambda^2(e^t),\,\Lambda^2(e^{-t}),\,\Lambda^3(1)+\delta\Lambda^4(t),\,\Lambda^3(t)+\frac12\delta\Lambda^4(t^2)\big\rangle$;
\item
\label{CLsCase11} $b=\tfrac12x^2-\tfrac12\beta^2y^2$, \ $\beta>0$:\quad
${\rm Ch}^0_b=\big\langle\Lambda^4(1),\,\Lambda^1_b(1),\,\Lambda^2(e^t),\,\Lambda^2(e^{-t}),\,\Lambda^3(\cos\beta t),\,\Lambda^3(\sin\beta t)\big\rangle$;
\item
\label{CLsCase12} $b=-\frac12x^2+\delta y$: \ 
${\rm Ch}^0_b=\big\langle\Lambda^4(1),\Lambda^1_b(1),\Lambda^2(\cos t),\Lambda^2(\sin t),\Lambda^3(1)+\delta\Lambda^4(t),\Lambda^3(t)+\frac12\delta\Lambda^4(t^2)\big\rangle$;
\item
\label{CLsCase13} $b=-\tfrac12x^2-\tfrac12\beta^2y^2$, \ $0<\beta<1$:\\[2pt]
${\rm Ch}^0_b=\big\langle\Lambda^4(1),\,\Lambda^1_b(1),\,\Lambda^2(\cos t),\,\Lambda^2(\sin t),\,\Lambda^3(\cos\beta t),\,\Lambda^3(\sin\beta t)\big\rangle$;
\item
\label{CLsCase14}
\begin{enumerate}\itemsep=.5ex
\item
\label{CLsCase14a} $b=0$:\quad
${\rm Ch}^0_b=\big\langle\Lambda^4(1),\,\Lambda^1_b(1),\,\Lambda^1_b(t),\,\Lambda^1_b(t^2),\,\Lambda^0,\,\Lambda^2(1),\,\Lambda^2(t),\,\Lambda^3(1),\,\Lambda^3(t)\big\rangle$;
\item
\label{CLsCase14b} $b=x$:\quad
${\rm Ch}^0_b=\big\langle\Lambda^4(1),\,\Lambda^1_b(1),\,\Lambda^1_b(t)+\frac32\Lambda^2(t^2)+\frac12\Lambda^4(t^3),\,\Lambda^1_b(t^2)+\Lambda^2(t^3)+\frac14\Lambda^4(t^4)$,\\[.5ex]
${}$\qquad\qquad$\Lambda^0+\frac12\Lambda^3(t^2),\,\Lambda^2(1)+\Lambda^4(t),\,\Lambda^2(t)+\frac12\Lambda^4(t^2),\,\Lambda^3(1),\,\Lambda^3(t)\big\rangle;$
\item
\label{CLsCase14c} $b=\frac12r^2$:\ \ 
${\rm Ch}^0_b=\big\langle\Lambda^4(1),\Lambda^1_b(1),\Lambda^1_b(e^{2t}),\Lambda^1_b(e^{-2t}),\Lambda^0,\Lambda^2(e^t),\Lambda^2(e^{-t}),\Lambda^3(e^t),\Lambda^3(e^{-t})\big\rangle$;
\item
\label{CLsCase14d} $b=-\frac12r^2$:\quad
${\rm Ch}^0_b=\big\langle\Lambda^4(1),\,\Lambda^1_b(1),\,\Lambda^1_b(\cos 2t),\,\Lambda^1_b(\sin 2t),\,\Lambda^0,\,\Lambda^2(\cos t),\,\Lambda^2(\sin t)$,\\[.5ex]
${}$\qquad\qquad$\Lambda^3(\cos t),\,\Lambda^3(\sin t)\big\rangle$.
\end{enumerate}
\end{enumerate}

\begin{remark}\label{rem:2DSWEsOnSpacesOf0thOrderCCs}
To obtain the associated spaces of zeroth-order conserved currents, 
one needs to replaced each basis conservation-law characteristic 
by the conserved current with the same number and with the same value of the corresponding parameter function 
among those given after the equation~\eqref{eq:RepresentationOfGenCCs}.
\end{remark}

\begin{remark}\label{rem:2DSWEsOnMaxExtensionsOf0thOrderChars}
The spaces of nonpositive-order conservation-law characteristics 
in Cases~\ref{CLsCase2}--\ref{CLsCase6} are indeed maximal extensions 
if the parameter function~$f$ runs only through the values
for which the corresponding values of the arbitrary element~$b$
are not $G^\sim$-equivalent to ones from Cases~\ref{CLsCase7}--\ref{CLsCase14}, 
where the extensions are of greater dimensions.
\end{remark}

\begin{corollary}
The dimension of the space of zeroth-order conservation laws of any system from the class~\eqref{eq:2DSWEs} is not greater than nine.
More specifically, $\dim {\rm Ch}^0_b\in\{2,3,4,5,6,9\}$ for any $b=b(x,y)$.
The span ${\rm Ch}^0_\spanindex$ is wider than the union of all~${\rm Ch}^0_b$, $\bigcup_b{\rm Ch}^0_b\subsetneq{\rm Ch}^0_\spanindex$.
\looseness=-1
\end{corollary}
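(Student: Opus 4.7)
The plan is to read off the dimensions directly from Theorem~\ref{thm:ClassificationOf0thOrderCLsOf2DSWEs1} and then construct an explicit witness for the strict inclusion. I first observe that every point equivalence between systems~$\mathcal L_b$ and~$\mathcal L_{\tilde b}$ induces a linear isomorphism between~${\rm Ch}^0_b$ and~${\rm Ch}^0_{\tilde b}$, and in particular preserves their dimensions. Hence it suffices to examine the $G^\sim$-inequivalent representatives listed in the theorem. For any~$b$ outside these representatives the space~${\rm Ch}^0_b$ coincides with the generic one $\langle\Lambda^4(1),\,\Lambda^1_b(1)\rangle$ of dimension~$2$, and the additional (non-$G^\sim$) equivalences among classification cases recorded in Section~\ref{sec:TransPropertiesOfSWEs} only identify certain cases with each other but do not alter dimensions.

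The second step is a straightforward tally of basis elements in each classification case; one verifies that those basis tuples are linearly independent because the decomposition $\gamma=c_1\Lambda^0+\Lambda^1_b(F^1)+\Lambda^2(F^2)+\Lambda^3(F^3)+\Lambda^4(F^4)$ supplied by~\eqref{eq:FormOfChars} is unique (e.g.\ $F^1$ is recovered as the coefficient of~$u^2$ in~$\gamma^3$, $F^2$ from the part of~$\gamma^1/h$ linear in~$(x,y,t)$, etc., and similarly for the remaining parameters). Reading off the counts one obtains dimensions $3$ in Cases~\ref{CLsCase1}, \ref{CLsCase2}; dimension $4$ in Cases~\ref{CLsCase3}--\ref{CLsCase6}; dimension $5$ in Case~\ref{CLsCase7}; dimension $6$ in Cases~\ref{CLsCase8}--\ref{CLsCase13}; and dimension $9$ in Case~\ref{CLsCase14}, so altogether $\dim{\rm Ch}^0_b\in\{2,3,4,5,6,9\}$ with maximum nine. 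Remark~\ref{rem:2DSWEsOnMaxExtensionsOf0thOrderChars} ensures that the dimension-three and dimension-four cases with an arbitrary function~$f$ are genuine and not already hidden subcases of the higher-dimensional cases.

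Finally, for the strict inclusion $\bigcup_b{\rm Ch}^0_b\subsetneq{\rm Ch}^0_\spanindex$ I would exhibit a concrete element of~${\rm Ch}^0_\spanindex$ that belongs to no ${\rm Ch}^0_b$. The natural choice is $\Lambda^4(t)$: by the uniqueness of the decomposition just invoked, its membership in~${\rm Ch}^0_b$ would force $F^1=F^2=F^3=0$, $c_1=0$, and $F^4=t$, and substitution of these values into the classifying equation~\eqref{eq:2DSWEsClassifyingEqFor0thOrderCLs} reduces it to the impossible identity $-1=0$, independently of~$b$. Thus $\Lambda^4(t)\in{\rm Ch}^0_\spanindex$ while $\Lambda^4(t)\notin{\rm Ch}^0_b$ for every~$b$, establishing the strict inclusion.

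The only point requiring any real care is the uniqueness of the decomposition~\eqref{eq:FormOfChars}, used both for the basis count and for the witness; once this is noted the whole corollary reduces to bookkeeping on the cases of Theorem~\ref{thm:ClassificationOf0thOrderCLsOf2DSWEs1}, so no serious obstacle is expected.
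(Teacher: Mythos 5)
Your proposal is correct and follows essentially the same route as the paper, which states this corollary as an immediate consequence of Theorem~\ref{thm:ClassificationOf0thOrderCLsOf2DSWEs1}: the dimension tally over the listed cases (together with invariance of $\dim{\rm Ch}^0_b$ under point equivalences) gives exactly $\{2,3,4,5,6,9\}$, and your witness $\Lambda^4(t)$ for the strict inclusion is precisely what the $k=0$ splitting in Section~\ref{sec:ClassificationProof} yields, since the classifying equation~\eqref{eq:2DSWEsClassifyingEqFor0thOrderCLs} forces $F^4_t=0$ whenever $F^1_t=F^2=F^3=0$ and $c_1=0$. The uniqueness of the decomposition~\eqref{eq:FormOfChars}, which you rightly flag as the one point needing care, does hold by splitting $\gamma^1/h$ and $\gamma^2/h$ with respect to $(x,y,u,v)$.
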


Similarly to the classification of Lie symmetries, 
the families~$\mathcal T^1$--$\mathcal T^3$ of admissible transformations of the class~\eqref{eq:2DSWEs}
induces additional equivalences among classification cases of Theorem~\ref{thm:ClassificationOf0thOrderCLsOf2DSWEs1},
{
\renewcommand{\labelenumi}{$\mathcal T^\theenumi$:}
\begin{enumerate}\itemsep=0ex\itemindent=2em
\item
\ref{CLsCase3c} $\to$ \ref{CLsCase3a}, \
\ref{CLsCase7c} $\to$ \ref{CLsCase7a}, \
\ref{CLsCase8b} $\to$ \ref{CLsCase8a}$_{\delta=0}$, \
\ref{CLsCase14d}  $\to$ \ref{CLsCase14a};
\item
\ref{CLsCase3b} $\to$ \ref{CLsCase3a}, \
\ref{CLsCase7b} $\to$ \ref{CLsCase7a}, \
\ref{CLsCase8c} $\to$ \ref{CLsCase8a}$_{\delta=0}$, \
\ref{CLsCase14c}  $\to$ \ref{CLsCase14a};
\item
\ref{CLsCase14b}  $\to$ \ref{CLsCase14a}, \quad
\ref{CLsCase4}, \ref{CLsCase8a}, \ref{CLsCase10}, \ref{CLsCase12}:%
\footnote{%
In Cases~\ref{CLsCase10} and~\ref{CLsCase12}, instead of the pure transformational part of $\mathcal T^3$
one should again use its composition with the permutation $(x,u)\leftrightarrow(y,v)$, 
which is associated with a discrete equivalence transformations of the class~\eqref{eq:2DSWEs}, 
cf.\ footnote~\ref{fnt:OnModificationOfAdmTrans}.
}
\ $\delta=1$ $\to$ $\delta=0$.
\end{enumerate}
This is why in Theorem~\ref{thm:ClassificationOf0thOrderCLsOf2DSWEs1} we use, 
analogously to Theorem~\ref{thm:GroupClassificationOf2DSWEs1}, 
the two-level numeration of classification cases, 
cf.\ Remark~\ref{rem:Two-LevelNumerationOfClassificationCases}.
}

In view of the above equivalences, the following assertions are obvious.

\begin{corollary}
A system from the class~\eqref{eq:2DSWEs} admits a nine-dimensional space of zeroth-order conservation laws 
if and only if it is reduced by a point transformation to the system from the same class with~$b=0$.
\end{corollary}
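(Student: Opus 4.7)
The plan is to exploit Theorem~\ref{thm:ClassificationOf0thOrderCLsOf2DSWEs1} together with the list of additional equivalences~$\mathcal T^1$, $\mathcal T^2$, $\mathcal T^3$ recalled from~\cite{bihl19a}. The preceding corollary already records that $\dim {\rm Ch}^0_b\in\{2,3,4,5,6,9\}$, so the only way to realize a nine-dimensional space of zeroth-order conservation laws is to land in one of the four sub-cases~\ref{CLsCase14a}--\ref{CLsCase14d} of Theorem~\ref{thm:ClassificationOf0thOrderCLsOf2DSWEs1}, up to~$G^\sim$-equivalence. This classification step is the heart of the matter, but it is already available and need not be redone.

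For the forward direction, I would argue as follows. Suppose $\dim{\rm Ch}^0_b=9$. By Theorem~\ref{thm:ClassificationOf0thOrderCLsOf2DSWEs1} there exists a transformation in~$G^\sim$ mapping~$b$ to one of the four values $0$, $x$, $\tfrac12 r^2$, or $-\tfrac12 r^2$. In the three nontrivial cases we invoke the additional equivalences listed after Theorem~\ref{thm:ClassificationOf0thOrderCLsOf2DSWEs1}: $\mathcal T^3$ sends \ref{CLsCase14b} to~\ref{CLsCase14a}, $\mathcal T^2$ sends \ref{CLsCase14c} to~\ref{CLsCase14a}, and $\mathcal T^1$ sends \ref{CLsCase14d} to~\ref{CLsCase14a}. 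Composing the appropriate~$G^\sim$-element with (at most) one of these admissible point transformations produces a point transformation of the original system~$\mathcal L_b$ onto~$\mathcal L_0$, proving the forward implication.

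For the reverse direction I would use the general fact, already highlighted in the footnote of Section~\ref{sec:TransPropertiesOfSWEs}, that a point transformation between two systems of differential equations induces a linear isomorphism between their spaces of zeroth-order conservation laws. Since Case~\ref{CLsCase14a} is explicitly listed with nine basis characteristics, any~$\mathcal L_b$ point-equivalent to~$\mathcal L_0$ must likewise have $\dim {\rm Ch}^0_b=9$.

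The only subtle point is to make sure the forward direction accounts for \emph{all} point equivalences that could mix the four nine-dimensional cases, not merely those generated by~$G^\sim$. That is taken care of by the exhaustiveness result on admissible transformations from~\cite{bihl19a} recalled in Section~\ref{sec:TransPropertiesOfSWEs}: the families $\mathcal T^1$, $\mathcal T^2$, $\mathcal T^3$ (together with~$G^\sim$ and the point symmetries of individual systems) generate the whole equivalence groupoid of the class~\eqref{eq:2DSWEs}, so no further identifications among cases are possible. Once this is invoked, the proof is a direct verification.
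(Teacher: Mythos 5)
Your argument is correct and follows exactly the route the paper intends: the paper derives this corollary as ``obvious'' from Theorem~\ref{thm:ClassificationOf0thOrderCLsOf2DSWEs1} (only Cases~\ref{CLsCase14a}--\ref{CLsCase14d} are nine-dimensional) together with the additional equivalences $\mathcal T^1$: \ref{CLsCase14d}~$\to$~\ref{CLsCase14a}, $\mathcal T^2$: \ref{CLsCase14c}~$\to$~\ref{CLsCase14a}, $\mathcal T^3$: \ref{CLsCase14b}~$\to$~\ref{CLsCase14a}, plus the standard fact that a point transformation between two systems induces an isomorphism of their spaces of zeroth-order conservation laws.

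One correction to your closing paragraph, though. You assert that $\mathcal T^1$, $\mathcal T^2$, $\mathcal T^3$ together with $G^\sim$ and the point symmetries of individual systems generate the whole equivalence groupoid of the class~\eqref{eq:2DSWEs}. The paper claims no such thing --- it only states that these three families are \emph{independent} of one another modulo such compositions, and it explicitly says that the complete description of the equivalence groupoid ``is still unknown''; this is precisely why Conjecture~\ref{con:2DSWEsClassificationOfZeroth-orderCLsWrtEquivGroupoid} remains a conjecture rather than a theorem. Fortunately, this exhaustiveness is not needed for the present corollary. The forward direction is an \emph{existence} statement: given $\dim{\rm Ch}^0_b=9$, Theorem~\ref{thm:ClassificationOf0thOrderCLsOf2DSWEs1} supplies a $G^\sim$-transformation to one of the four listed values of~$b$, and the explicit $\mathcal T^i$ supply a further point transformation to $b=0$; whether additional, as yet unknown, admissible transformations exist is irrelevant, since extra identifications could only reinforce the conclusion. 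Exhaustiveness of the known admissible transformations would be needed only to prove \emph{in}equivalence of classification cases (as in Conjecture~\ref{con:2DSWEsClassificationOfZeroth-orderCLsWrtEquivGroupoid}), not equivalence. Deleting that paragraph leaves a complete and correct proof.
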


\begin{corollary}
Any system from the class~\eqref{eq:2DSWEs} with five-dimensional space of zeroth-order conservation laws
is similar, with respect to a point transformations, to the system from the same class with~$b=\pm r^{-2}$.
\end{corollary}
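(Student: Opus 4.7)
The plan is to read off the answer from Theorem~\ref{thm:ClassificationOf0thOrderCLsOf2DSWEs1} together with the additional point equivalences $\mathcal T^1$--$\mathcal T^3$. First I would scan the classification list and compute $\dim {\rm Ch}^0_b$ for each case: Cases~\ref{CLsCase1}--\ref{CLsCase2} give $\dim=3$; Cases~\ref{CLsCase3}--\ref{CLsCase6} give $\dim=4$; Cases~\ref{CLsCase7a}--\ref{CLsCase7c} give $\dim=5$; Cases~\ref{CLsCase8}--\ref{CLsCase13} give $\dim=6$; and Cases~\ref{CLsCase14a}--\ref{CLsCase14d} give $\dim=9$. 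Hence the only $G^\sim$-inequivalent systems of the class~\eqref{eq:2DSWEs} admitting a five-dimensional space of zeroth-order conservation laws are exactly those listed in Case~\ref{CLsCase7}, i.e.\ $b=\varepsilon r^{-2}$, $b=\varepsilon r^{-2}+\tfrac12 r^2$, and $b=\varepsilon r^{-2}-\tfrac12 r^2$ with $\varepsilon=\pm1$.

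Second, I would invoke the additional equivalences induced by the families of admissible transformations $\mathcal T^1$ and $\mathcal T^2$ already collected immediately after Theorem~\ref{thm:ClassificationOf0thOrderCLsOf2DSWEs1}. Indeed, $\mathcal T^2$ sends Case~\ref{CLsCase7b} to Case~\ref{CLsCase7a}, and $\mathcal T^1$ sends Case~\ref{CLsCase7c} to Case~\ref{CLsCase7a}. Consequently, any system from the class~\eqref{eq:2DSWEs} with a five-dimensional space of zeroth-order conservation laws is point-equivalent to a system of the form $b=\varepsilon r^{-2}=\pm r^{-2}$, which is the claim.

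There is no real obstacle here once the classification theorem is in hand; the entire argument is bookkeeping. The only slight subtlety is to remember that the listed extensions in Theorem~\ref{thm:ClassificationOf0thOrderCLsOf2DSWEs1} are only taken up to $G^\sim$-equivalence, so without $\mathcal T^1$ and $\mathcal T^2$ the naive count would give three inequivalent cases; the additional admissible transformations are what collapse them to the single representative~$b=\pm r^{-2}$. Also, in view of Remark~\ref{rem:2DSWEsOnMaxExtensionsOf0thOrderChars}, specialisations of the parameter function~$f$ in Cases~\ref{CLsCase1}--\ref{CLsCase6} do not produce extra five-dimensional extensions, so no hidden cases are missed.
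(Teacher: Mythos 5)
Your proposal is correct and is exactly the argument the paper intends: the paper states the corollary is ``obvious'' in view of the additional equivalences, and your bookkeeping — identifying Cases~\ref{CLsCase7a}--\ref{CLsCase7c} as the only five-dimensional extensions and collapsing them to Case~\ref{CLsCase7a} via $\mathcal T^1$ and $\mathcal T^2$ — is precisely that reasoning, with the helpful extra care about Remark~\ref{rem:2DSWEsOnMaxExtensionsOf0thOrderChars}.
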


\begin{remark}\label{rem:CorrespondenceBetweenLieSymsAnd0thOrderCLsFor2DSWEs}
The comparison of Theorems~\ref{thm:GroupClassificationOf2DSWEs1} and~\ref{thm:ClassificationOf0thOrderCLsOf2DSWEs1}
shows that the extension of the space of zeroth-order conservation laws is possible only for systems 
with Lie-symmetry extension. 
The correspondence of cases of Theorem~\ref{thm:ClassificationOf0thOrderCLsOf2DSWEs1}
to those of Theorem~\ref{thm:GroupClassificationOf2DSWEs1} is obvious, 
and there is a correlation between the dimensions of the corresponding spaces. 
Cases~\ref{LieSymCase1}$_{\nu\ne-2}$, \ref{LieSymCase2}, \ref{LieSymCase4}, \ref{LieSymCase5}, \ref{LieSymCase10}, \ref{LieSymCase11}, 
\ref{LieSymCase12a}--\ref{LieSymCase12c} with $\alpha\ne0$, \ref{LieSymCase13}, \ref{LieSymCase14} and \ref{LieSymCase15} 
of Theorem~\ref{thm:GroupClassificationOf2DSWEs1} have no counterparts among 
cases of Theorem~\ref{thm:GroupClassificationOf2DSWEs1}. 
All these facts can be explained within the framework of the Noether relation 
between Hamiltonian symmetries and conservation laws 
for systems from the class~\eqref{eq:2DSWEs}, see Section~\ref{sec:HamiltoniansOf2DSWEs}.
\end{remark}

Using the above correspondence, we can adopt Conjecture~16 from~\cite{bihl19a}, 
which concerns the group classification of the class~\eqref{eq:2DSWEs} up to the general point equivalence
and was justified by proving the major part of (in)equivalences among classification cases by algebraic tools, 
to zeroth-order conservation laws. 

\begin{conjecture}\label{con:2DSWEsClassificationOfZeroth-orderCLsWrtEquivGroupoid}
A complete list of inequivalent (up to all admissible transformations) cases of extensions 
for the spaces of zeroth-order conservation laws of systems from the class~\eqref{eq:2DSWEs} is exhausted by
Cases~\ref{CLsCase1}, \ref{CLsCase2}, \ref{CLsCase3a}, \ref{CLsCase4}$_{\delta=0}$, \ref{CLsCase5}, \ref{CLsCase6},
\ref{CLsCase7a}, \ref{CLsCase8a}$_{\delta=0}$, \ref{CLsCase9}, \ref{CLsCase10}$_{\delta=0}$, \ref{CLsCase11}, \ref{CLsCase12}$_{\delta=0}$, \ref{CLsCase13}
and~\ref{CLsCase14a} of Theorem~\ref{thm:ClassificationOf0thOrderCLsOf2DSWEs1}.
\end{conjecture}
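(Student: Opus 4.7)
My plan is to reduce the conjecture to its already-established analogue for Lie symmetries, namely Conjecture~16 in~\cite{bihl19a}, by exploiting the Noether correspondence between zeroth-order conservation laws and Hamiltonian symmetries that will be developed in Section~\ref{sec:HamiltoniansOf2DSWEs}. The key ingredient is that every system~$\mathcal L_b$ from the class~\eqref{eq:2DSWEs} admits a Hamiltonian structure whose Hamiltonian operator~$\mathcal J_b$ transforms covariantly under every admissible point transformation of the class; therefore the map $\gamma\mapsto\mathcal J_b\gamma$ gives a linear isomorphism from~${\rm Ch}^0_b$ onto a distinguished ``Hamiltonian'' subspace~$\mathfrak h_b\subseteq\mathfrak g_b$ (with kernel the Casimirs, spanned essentially by $\Lambda^4$), and this isomorphism is intertwined by every admissible transformation.

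With this in hand, the proof would proceed in three steps. First, I would verify the explicit correspondence hinted at in Remark~\ref{rem:CorrespondenceBetweenLieSymsAnd0thOrderCLsFor2DSWEs} by direct inspection of the expressions~\eqref{eq:SpanningCanditatesForChars} against Theorem~\ref{thm:GroupClassificationOf2DSWEs1}: each basis characteristic $\Lambda^0$, $\Lambda^1_b(F^1)$, $\Lambda^2(F^2)$, $\Lambda^3(F^3)$ corresponds to exactly one of the Lie-symmetry generators $J$, $2D(F^1)$, $P(F^2,0)$, $P(0,F^3)$, while $\Lambda^4(F^4)$ is orthogonal to~$\mathfrak g_b$ under the Noether pairing. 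Second, I would check that the collapses induced by the families $\mathcal T^1,\mathcal T^2,\mathcal T^3$ of admissible transformations of the class (listed right after the characteristics) strip Theorem~\ref{thm:ClassificationOf0thOrderCLsOf2DSWEs1} down exactly to the list of the conjecture, paralleling what was done for Lie symmetries in~\cite{bihl19a}. Third, I would transfer pairwise inequivalence of the remaining cases from the Lie-symmetry side: any admissible equivalence between two of the remaining cases at the level of~${\rm Ch}^0_b$ would, by the equivariance of the Noether correspondence, yield an admissible equivalence between the corresponding cases of Theorem~\ref{thm:GroupClassificationOf2DSWEs1}, contradicting Conjecture~16 of~\cite{bihl19a} (to the extent it has been verified there).

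The hard part is precisely the same obstacle as in~\cite{bihl19a}: the algebraic invariants used for proving pairwise inequivalence of the Lie-symmetry cases (the abstract structure of $\mathfrak g_b$, the rank and grading induced by $D^{\rm t}$, the multiplicity of distinguished ideals, etc.) do not by themselves separate every pair of remaining cases, so that the residual inequivalences had to be handled there by more delicate arguments and are strictly speaking only conjectural. Consequently, any proof of the present conjecture along the above lines inherits the same partial status, unless one can supply genuinely new invariants of~${\rm Ch}^0_b$ — most plausibly ones arising from the Poisson algebra structure that ${\rm Ch}^0_b$ inherits from~$\mathcal J_b$ and which is finer than the Lie bracket on~$\mathfrak g_b$ alone — to resolve the remaining ambiguities (in particular between Cases~\ref{CLsCase1} and \ref{CLsCase2} for specialised $f$, and between the $\varepsilon=\pm1$ branches of Cases~\ref{CLsCase7a} and~\ref{CLsCase8a}$_{\delta=0}$).
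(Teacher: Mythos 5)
Your proposal matches what the paper itself does: the statement is explicitly a conjecture, and the paper's only justification for it is precisely the route you describe --- transferring Conjecture~16 of~\cite{bihl19a} through the Noether correspondence between~${\rm Ch}^0_b$ and the Hamiltonian part of~$\mathfrak g_b$, together with the case collapses induced by $\mathcal T^1$--$\mathcal T^3$ --- with the authors likewise conceding that a genuine proof (in particular, the inequivalence of Cases~\ref{CLsCase1} and~\ref{CLsCase2} and the impossibility of further gauging of residual constants) would require the still-unknown equivalence groupoid of the class~\eqref{eq:2DSWEs}. Your honest identification of exactly these residual gaps is consistent with the paper, so the proposal is as complete as the paper's own treatment.
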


In view of the reduction of the number of classification cases for zeroth-order conservation laws 
in comparison with that for Lie symmetries, for proving Conjecture~\ref{con:2DSWEsClassificationOfZeroth-orderCLsWrtEquivGroupoid} 
we just need to verify the inequivalence of Cases~\ref{CLsCase1} and~\ref{CLsCase2}
and the impossibility of further gauging of constant parameters remaining in some cases. 
This requires the complete description of the equivalence groupoid of the class~\eqref{eq:2DSWEs}, 
which is still unknown. 

\begin{remark}\label{rem:PhysicalSenseOfCLs}
Among the constructed conservation laws, there are those with obvious physical interpretation. 
Thus, the characteristics 
$\Lambda^0$, $\Lambda^1_b(\frac12)$, $\Lambda^2(1)$, $\Lambda^3(1)$ and $\Lambda^4(1)$ 
corresponds to the conservation of angular momentum, energy, $x$-momentum, $y$-momentum and mass, respectively. 
The pair of the conservation laws with characteristics $\Lambda^2(t)$ and $\Lambda^3(t)$ 
is related to the center-of-mass theorem. 
Note that the zeroth-order conservation laws associated with distinguished (Casimir) functionals 
of the Hamiltonian operator for systems from the class~\eqref{eq:2DSWEs} 
are exhausted by those with characteristics of the form~$\Lambda^4(c)$, where $c$ is an arbitrary constant, 
see Section~\ref{sec:HamiltoniansOf2DSWEs} below. 
\end{remark}

\section{Proof of the classification}\label{sec:ClassificationProof}

For classifying nonpositive-order conservation-law characteristics of systems from the class~\eqref{eq:2DSWEs}, 
the method of furcate splitting works in the same way as for classifying Lie symmetries of such systems,  
i.e., via fixing an arbitrary value for the variable $t$ 
in the classifying equation~\eqref{eq:2DSWEsClassifyingEqFor0thOrderCLs}, 
which results in the so-called \textit{template form} of equations with respect to the arbitrary element~$b$:
\begin{gather}\label{eq:TemplateFormForClassificationOfZeroth-orderCLs}
a_1(xb_x\!+yb_y\!+2b)+a_2(yb_x\!-xb_y)+a_3b_x\!+a_4b_y\!+a_5\frac{x^2\!+y^2}2+a_6x+a_7y+a_8=0.
\end{gather}
Here, $a_1,\dots, a_8$ are constants.

The maximal number $k=k(b)$ of template-form equations
with linearly independent coefficient tuples $\bar a^i=(a^i_1,\dots, a^i_8)$, $i=1,\dots,k$, 
that hold for a fixed value of the arbitrary element~$b$, 
\begin{gather}\label{eq:2DSWEsSysOfTemplateFormEqs}
\begin{split}
&a^i_1(xb_x\!+yb_y\!+2b)+a^i_2(yb_x\!-xb_y)+a^i_3b_x\!+a^i_4b_y\!+a^i_5\frac{x^2\!+y^2}2+a^i_6x+a^i_7y+a^i_8=0,\\
&i=1,\dots,k,
\end{split}
\end{gather}
depends on the fixed value but hereafter we will not indicate this dependence explicitly. 
It is clear that $0\leqslant k\leqslant8$.
If $k>0$, then by 
\[
A:=(a^i_j)_{j=1,\dots,8}^{i=1,\dots,k}, \quad A_l:=(a^i_j)_{j=1,\dots,l}^{i=1,\dots,k},\quad 1\leqslant l\leqslant8
\]
we denote the matrix with coefficients 
of the system of template-form equations~\eqref{eq:2DSWEsSysOfTemplateFormEqs} 
and its submatrix constituted by the first $l$ columns, respectively. 
Thus, $A_8=A$, and $\rank A=k$.
In addition, the consistency of the system~\eqref{eq:2DSWEsSysOfTemplateFormEqs} with respect to~$b$ 
implies $\rank A_4=\rank A=k$, and hence $k\leqslant4$.

To verify this consistency in the case $k>1$,
to the $i$th equation of the system~\eqref{eq:2DSWEsSysOfTemplateFormEqs} for each $i=1,\dots,k$
we relate the vector field
\begin{gather}\label{eq:AssociatedVectorField}
\begin{split}
\mathbf v_i={}&\left(a^i_1x+a^i_2y+a^i_3\right)\p_x+\left(a^i_1y-a^i_2x+a^i_4\right)\p_y\\
&{}-\left(2a^i_1b+\tfrac12a^i_5(x^2+y^2)+a^i_6x+a^i_7y+a^i_8\right)\p_b.
\end{split}
\end{gather}
Then we have $\mathbf v_1,\dots,\mathbf v_k\in\mathfrak a$, where
\[%{eq:SpanOfAssociatedVFsForTemplateFormEqs}
\mathfrak a:=\big\langle x\p_x+y\p_y+2b\p_b,\,-x\p_y+y\p_x,\,\p_x,\,\p_y,\,(x^2+y^2)\p_b,\,x\p_b,\,y\p_b,\,\p_b\big\rangle.
\]
Direct computation shows that
\begin{gather*}
[\mathbf v_i,\mathbf v_{i'}]\in[\mathfrak a,\mathfrak a]=\big\langle \p_x,\,\p_y,(x^2+y^2)\p_b,\,x\p_b,\,y\p_b,\,\p_b\big\rangle\subset\mathfrak a, \quad i,i'=1,\dots,k,
\end{gather*}
thereby yielding that $[\mathbf v_i,\mathbf v_{i'}]\in\mathfrak a$, $i,i'=1,\dots,k$. 
Therefore, the span $\mathfrak a$ is closed with respect to the Lie bracket of vector fields, thus constituting a Lie algebra.

The equation for $b$ associated with $[\mathbf v_i,\mathbf v_{i'}]$
is a differential consequence of the subsystem of the equations associated with $\mathbf v_i$ and with $\mathbf v_{i'}$,
and it is of the same template form~\eqref{eq:TemplateFormForClassificationOfZeroth-orderCLs}. 
Since the maximal number of linearly independent vector fields
corresponding to template-form equations for the respective value of the arbitrary element~$b$ 
coincides with $k=k(b)$, we obtain
\begin{gather}\label{eq:CompatibilityCondInTermsOfAssociatedVFs}
[\mathbf v_i,\mathbf v_{i'}]\in\big\langle\mathbf v_1,\dots,\mathbf v_k\big\rangle, \quad i,i'=1,\dots,k.
\end{gather}
It is also possible to use the condition~\eqref{eq:CompatibilityCondInTermsOfAssociatedVFs}
formulated for the projections $\hat{\mathbf v}_1$, \dots, $\hat{\mathbf v}_k$ of the vector fields $\mathbf v_1$, \dots, $\mathbf v_k$
onto the space with the coordinates $(x,y)$, which reads
\begin{gather}\label{eq:CompatibilityCondInTermsOfProjectionsOfAssociatedVFs}
[\hat{\mathbf v}_i,\hat{\mathbf v}_{i'}]\in\big\langle\hat{\mathbf v}_1,\dots,\hat{\mathbf v}_k\big\rangle, \quad i,i'=1,\dots,k.
\end{gather}

The system~\eqref{eq:2DSWEsSysOfTemplateFormEqs} can be simplified 
by linearly combining its equations and by acting with transformations from $G^\sim$.
Specifically, in the case $a^i_j\ne0$
one can divide the $i$th equation of~\eqref{eq:2DSWEsSysOfTemplateFormEqs} by~$a^i_j$
to set $a^i_j=1$.
Using simultaneous shifts with respect to~$x$ and~$y$, which belong to~$G^\sim$,
one can set either $a^i_3=a^i_4=0$ if $(a^i_1,a^i_2)\ne(0,0)$ or $a^i_6=a^i_7=0$ if $a^i_5\ne0$.
Similarly, if $a^i_1\ne0$, then shifts of~$b$ allows one to set $a^i_8=0$.

The value $k=0$ corresponds to the case of unrestricted arbitrary element~$b$. 
Here, one can split the equation~\eqref{eq:2DSWEsClassifyingEqFor0thOrderCLs} with respect to $b$ and its derivatives, 
which yields $F^1_t=F^2=F^3=F^4_t=0$ and $c_1=0$. 
Thus, for any value of the arbitrary element~$b$, 
the system~$\mathcal L_b$ admits two linearly independent conservation laws with characteristics~$\Lambda^1_b(1)$ and~$\Lambda^4(1)$, i.e., 
\[
{\rm Ch}^0_b\supseteq{\rm Ch}^{0,\rm unf}_b:=\langle\Lambda^1_b(1),\,\Lambda^4(1)\rangle.
\]
We will call \smash{${\rm Ch}^{0,\rm unf}_b$} 
the space of uniform nonpositive-order conservation-law characteristics of the system~$\mathcal L_b$ 
in the context of the class~\eqref{eq:2DSWEs}. 
This space is two-dimensional. 
The space \smash{${\rm Ch}^0_\cap:=\bigcap_b{\rm Ch}^0_b=\langle\Lambda^4(1)\rangle$} 
of common nonpositive-order conservation-law characteristics of the systems from the class~\eqref{eq:2DSWEs} 
is a proper subspace of \smash{${\rm Ch}^{0,\rm unf}_b$} for each value of the arbitrary element~$b$ 
due to involving~$b$ in $\Lambda^1_b(1)$. 
The space \smash{${\rm Ch}^0_\cap$} has the counterpart, in the sense of commonness, among Lie symmetry structures, 
which is the kernel Lie invariance algebra $\mathfrak g^{\cap}=\langle D(1)\rangle$ of the systems from the class~\eqref{eq:2DSWEs}.
At the same time, there is no counterpart of \smash{${\rm Ch}^{0,\rm unf}_b$}, in the sense of uniformity, 
among Lie symmetry structures related to~$\mathcal L_b$. 
See Section~\ref{sec:HamiltoniansOf2DSWEs} below 
for another relation among the above objects via the Hamiltonian operator~$\mathfrak H$, 
which is common for the systems from the class~\eqref{eq:2DSWEs}. 

The above discussion on \smash{${\rm Ch}^{0,\rm unf}_b$} implies that 
we can reformulate the problem of classification of zeroth-order conservation laws of systems from the class~\eqref{eq:2DSWEs} 
in the following way: 
\emph{find all $G^\sim$-inequivalent values of the arbitrary element~$b$, jointly with~${\rm Ch}^0_b$, 
for which \smash{${\rm Ch}^0_b\supsetneq{\rm Ch}^{0,\rm unf}_b$}}.

For each of the other distinguished cases $k=1$, \dots, $k=4$,
in Section~\ref{sec:DetailsOfProof} below we make the following steps, additionally splitting the consideration into subcases
depending on values of the parameters~$a^i_j$:
\begin{itemize}\itemsep=0ex
\item
for $k>1$, compute the values of the parameters~$a^i_j$
for which the related system~\eqref{eq:2DSWEsSysOfTemplateFormEqs}
is compatible and can be derived by splitting the equation~\eqref{eq:2DSWEsClassifyingEqFor0thOrderCLs},
\item
gauge maximally possible number of the parameters~$a^i_j$
by linearly recombining template-form equations and by acting with transformations from the group~$G^\sim$
and re-denote the remaining parameters~$a^i_j$,
\item
integrate the reduced system~\eqref{eq:2DSWEsSysOfTemplateFormEqs} of template-form equations
with respect to the arbitrary element~$b$,
\item
gauge, if possible, the integration constant by equivalence transformations of the class~\eqref{eq:2DSWEs},
\item
for each of the obtained values of~$b$, 
solve the system of determining equations with respect to the parameters
$c_1$, $F^1$, $F^2$ and $F^3$,
which results in the space~${\rm Ch}^0_b$ of nonpositive-order conservation-law characteristics
of the system~$\mathcal L_b$.
\end{itemize}
The order of steps is not fixed, and they can intertwine. 
In view of~\eqref{eq:RepresentationOfGenCCs}, 
the construction of the corresponding spaces of canonical conserved currents is straightforward.

\section[Hamiltonian structures and generating sets of conservation laws]
{Hamiltonian structures and generating sets\\ of conservation laws}\label{sec:HamiltoniansOf2DSWEs}

Hamiltonian representations for systems from the class~\eqref{eq:2DSWEs} 
can be easily constructed by generalizing the well-known Hamiltonian representation for 
the shallow water equations with flat bottom topography 
\cite[Eqs.~(3.6)--(3.7)]{salm88a}, see also \cite[Section~4.4]{shep90a}. 
For each value of the arbitrary element~$b$, 
the system~$\mathcal L_b$ is Hamiltonian and can be represented in the form 
$\mathbf w_t=\mathfrak H\,\delta\mathcal H_b$, 
where $\delta$~stands for the variational derivative 
with respect to the tuple of dependent variables $\mathbf w=(w^1,w^2,w^3):=(u,v,h)$, 
\[
\mathfrak H:=\begin{pmatrix}
 0&q&-\mathrm D_x\\
-q&0&-\mathrm D_y\\
-\mathrm D_x&-\mathrm D_y&0
\end{pmatrix}, \quad
\mathcal H_b(\mathbf w):=\frac12\iint h\big(u^2+v^2+h-2b\big)\,{\rm d}x\,{\rm d}y 
\] 
are the associated Hamiltonian differential operator and the associated Hamiltonian functional, 
$\mathrm D_x$ and $\mathrm D_y$ denote the total derivative operators with respect to~$x$ and~$y$, 
respectively, and $q:=(v_x-u_y)/h$ is the shallow water potential vorticity. 
Thus, 
\[
\delta\mathcal H_b
%=\big(\delta\mathcal H_b/\delta u,\delta\mathcal H_b/\delta v,\delta\mathcal H_b/\delta h\big)^{\mathsf T}
%=\big(hu,hv,\tfrac12(u^2+v^2)+h-b\big)^{\mathsf T},
=\left(\frac{\delta\mathcal H_b}{\delta u},\frac{\delta\mathcal H_b}{\delta v},\frac{\delta\mathcal H_b}{\delta h}\right)^{\mathsf T}
=\left(hu,hv,\frac12(u^2+v^2)+h-b\right)^{\mathsf T},
\] 
and the Poisson bracket induced by~$\mathfrak H$ is  
$\{\mathcal I,\mathcal J\}=\iint\delta\mathcal I\cdot \mathfrak H\,\delta\mathcal J\,{\rm d}x\,{\rm d}y$
for appropriate functionals~$\mathcal I$ and~$\mathcal J$ of~$\mathbf w$, 
which expands, after an integration by parts, to 
\begin{gather*}
\iint\left(
\left(\mathrm D_x\frac{\delta\mathcal I}{\delta u}+\mathrm D_y\frac{\delta\mathcal I}{\delta v}\right)\frac{\delta\mathcal J}{\delta h}
-\frac{\delta\mathcal I}{\delta h}\left(\mathrm D_x\frac{\delta\mathcal J}{\delta u}+\mathrm D_y\frac{\delta\mathcal J}{\delta v}\right)
+q\left(\frac{\delta\mathcal I}{\delta u}\frac{\delta\mathcal J}{\delta v}-\frac{\delta\mathcal I}{\delta v}\frac{\delta\mathcal J}{\delta u}\right)
\right){\rm d}x\,{\rm d}y.
\end{gather*}
Strengthening results of \cite[Eq.~(4.55)]{shep90a} on distinguished (Casimir) functionals of the Hamiltonian operator~$\mathfrak H$, 
we get the following assertion. 

\begin{proposition}
The space of distinguished (Casimir) functionals of the Hamiltonian operator~$\mathfrak H$ 
consists of the functionals of the form 
$\mathcal C_R:=\iint hR(q)\,{\rm d}x\,{\rm d}y$, where $R$ is an arbitrary smooth function of~$q$. 
\end{proposition}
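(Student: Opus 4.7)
The plan is to characterize the kernel of $\mathfrak H$ on variational derivatives. Writing $\delta\mathcal C = (A,B,C)^{\mathsf T}$ with $A := \delta\mathcal C/\delta u$, $B := \delta\mathcal C/\delta v$, $C := \delta\mathcal C/\delta h$, the condition $\mathfrak H\,\delta\mathcal C = 0$ unfolds into the three component equations
\begin{gather*}
qB - \mathrm D_x C = 0, \qquad qA + \mathrm D_y C = 0, \qquad \mathrm D_x A + \mathrm D_y B = 0.
\end{gather*}
Casimir functionals are precisely those whose variational derivatives solve this system, so the task reduces to describing all variational $(A,B,C)$ in the kernel.

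For the forward direction, I would apply the Euler operators to the density $\rho = hR(q)$ with $q = (v_x-u_y)/h$. The only nontrivial jet derivatives of~$\rho$ are $\partial\rho/\partial u_y = -R'(q)$, $\partial\rho/\partial v_x = R'(q)$ and $\partial\rho/\partial h = R(q) - qR'(q)$, which give $A = \mathrm D_y R'(q)$, $B = -\mathrm D_x R'(q)$ and $C = R(q) - qR'(q)$. Substituting into the three kernel equations and expanding the total derivatives by the chain rule yields identities: in the first two equations, the terms involving $qR''(q)q_x$ and $R'(q)q_x$ (and their $y$-analogues) cancel pairwise, while the third is immediate from the commutativity of total derivatives.

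For the converse, I would eliminate $A$ and $B$ from the first two kernel equations on the generic set $q \ne 0$ to obtain $A = -\mathrm D_y C/q$ and $B = \mathrm D_x C/q$. Substituting into the third kernel equation and clearing denominators produces the compatibility
\begin{gather*}
\mathrm D_x q\cdot \mathrm D_y C - \mathrm D_y q\cdot \mathrm D_x C = 0,
\end{gather*}
which states that the total $(x,y)$-gradients of $C$ and of $q$ on the jet space are parallel; this forces $C = \Phi(q)$ for some smooth~$\Phi$. The first-order linear ODE $R(q) - qR'(q) = \Phi(q)$, equivalently $R''(q) = -\Phi'(q)/q$, then determines $R$ up to an affine-in-$q$ piece, whose constant part reproduces the mass Casimir $\iint h\,\mathrm dx\,\mathrm dy$ and whose linear part contributes $\iint(v_x-u_y)\,\mathrm dx\,\mathrm dy$, a divergence producing a trivial functional that may be discarded. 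The resulting $\mathcal C_R$ has the same variational derivative as the original Casimir~$\mathcal C$, so the two agree modulo triviality.

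The principal obstacle is the passage from the parallel-gradient identity to the functional dependence $C = \Phi(q)$. A~priori $C$ may depend on any finite-order jet of $(u,v,h)$ and on the independent variables, whereas the parallel-gradient condition only restricts its total $(x,y)$-derivatives. The natural strategy is an induction on the maximal order of jet variables appearing in~$C$: at the top order, the coefficient of the leading derivative in the identity must match a suitable component of $(q_x, q_y)$, and this combined with the Helmholtz self-adjointness conditions on $(A, B, C)$ forces the top-order dependence to descend step by step, ultimately leaving~$C$ as a function of~$q$ only.
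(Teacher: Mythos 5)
Your proposal is correct and follows essentially the same route as the paper's proof: the forward direction is the same direct computation of $\delta\mathcal C_R=\big(\mathrm D_yR'(q),-\mathrm D_xR'(q),R(q)-qR'(q)\big)^{\mathsf T}$, and the converse derives the same relations $A=-q^{-1}\mathrm D_yC$, $B=q^{-1}\mathrm D_xC$ and the same compatibility condition $(\mathrm D_xq)\mathrm D_yC-(\mathrm D_yq)\mathrm D_xC=0$, which is then used to force $C=\Phi(q)$ by an order-reduction argument before solving $R-qR'=\Phi$. The step you flag as the principal obstacle is treated at the same (sketchy) level of detail in the paper itself, and your remaining remarks (the $cq$-ambiguity in $R$ contributing only the trivial divergence $hq=v_x-u_y$) match the paper's ``modulo total divergences'' conclusion.
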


\begin{proof}
Denoting $R':=\p R/\p q$, for any~$R$ one has 
\[\mathfrak H\delta\mathcal C_R=\mathfrak H\big(\mathrm D_yR'(q),-\mathrm D_xR'(q),R(q)-qR'(q)\big)^{\mathsf T}=0,\]
i.e., $\mathcal C_R$ is a distinguished (Casimir) functional for~$\mathfrak H$.

Conversely, let a functional $\mathcal K=\iint K{\rm d}x\,{\rm d}y$ be distinguished for~$\mathfrak H$. 
This means by definition that $K=K[u,v,h]$ is a differential function that depends at most on 
$(x,y)$, the dependent variables $(u,v,h)$ and their derivatives with respect to $(x,y)$, 
and satisfies the equation $\mathfrak H\mathsf EK=0$. 
Here $\mathsf E$ denotes the Euler operator with respect to $(u,v,h)$. 
Then the tuple of differential functions $\mathsf EK=:(L^1,L^2,L^3)$ belongs to the kernel of~$\mathfrak H$, 
which implies $L^1=-q^{-1}\mathrm D_yL^3$, $L^2=q^{-1}\mathrm D_xL^3$ and $(\mathrm D_xq)\mathrm D_yL^3-(\mathrm D_yq)\mathrm D_xL^3=0$.
If we suppose that $\ord L^3>1$, then the last equation gives a contradiction. 
Therefore, $\ord L^3\leqslant1$. 
Similarly, we prove that $L^3_{h_x}=0$, i.e., the function~$L^3$ depends at most on $(t,x,y,u,v,q,u,v,u_x,u_y,v_x,v_y)$,
where we change the jet coordinates, replacing~$h$ by~$q$. 
Representing the same equation in the new coordinates, we derive that $L^3=L^3(q)$. 
Modulo adding total divergences with respect to $(x,y)$, 
the solution of the equation $\mathsf EK=(L^1,L^2,L^3)$ for~$K$ leads to the representation 
$K=hR(q)$, where $R$ is an arbitrary solution of the equation $R-qR'=L^3$.
\end{proof}

The local conservation law associated with the functional~$\mathcal C_R$ has the characteristic 
\[\big(\mathrm D_yR'(q),-\mathrm D_xR'(q),R(q)-qR'(q)\big)^{\mathsf T}\] 
(cf. the beginning of the above proof) and contains the conserved current $hR(q)(1,u,v)$. 
Therefore, this conservation law is zero if and only if $R=cq$ for some constant~$c$. 
Its order is equal to zero or one if $R$ is a nonzero constant or $R''\ne0$, respectively. 
Among elements of the family $\{\mathcal C_R\}$, there are functionals associated 
with the conservation of mass ($R=1$), 
the trivial conservation of circulation ($R=q$) 
and the conservation of potential enstrophy ($R=\frac12q^2$).

The operator~$\mathfrak H$ maps cosymmetries and conservation-law characteristics of~$\mathcal L_b$ 
to characteristics of symmetries and of Hamiltonian symmetries, respectively. 
Acting by~$\mathfrak H$ on the elements~\eqref{eq:SpanningCanditatesForChars} of~${\rm Ch}^0_\spanindex$,
we obtain 
\begin{gather*}%\label{eq:ActionOfHamiltonianOpOnChars}
\mathfrak H\Lambda^0=J[\mathbf w],\quad
\mathfrak H\Lambda^1_b(F^1)=2D(F^1)[\mathbf w]\quad\mbox{on}\quad \mathcal L_b,\\
\mathfrak H\Lambda^2(F^2)=P(F^2,0)[\mathbf w],\quad
\mathfrak H\Lambda^3(F^3)=P(0,F^3)[\mathbf w],\quad
\mathfrak H\Lambda^4(F^4)=(0,0,0)^{\mathsf T}.
\end{gather*}
Here the characteristic~$\mathbf v[\mathbf w]$ of a vector field~$\mathbf v=\tau\p_t+\xi^1\p_x+\xi^2\p_y+\eta^1\p_{w^1}+\eta^2\p_{w^2}+\eta^3\p_{w^3}$ 
with components depending on $(t,x,y,u,v,h)$ is given by \[\mathbf v[\mathbf w]=(\eta^i-\tau w^i_t-\xi^1 w^i_x-\xi^2 w^i_y,i=1,2,3).\]
The tuple $\Lambda^4(F^4)$ belongs to the kernel of~$\mathfrak H$ since 
it is the variational derivative of the time-dependent distinguished functional $F^4\mathcal C_1$.
At the same time, the tuple 
\[\big(t(uu_x+vu_y+h_x-b_x)-u,\,t(uv_x+vv_y+h_y-b_y)-v,\,t(uh)_x+t(vh)_y-2h\big)^{\mathsf T}\]
coinciding with $D^{\rm t}[\mathbf w]$ on solutions of $\mathcal L_b$
does not belong to the image of~$\mathfrak H$ under acting on triples of differential functions
since, in particular, $h$ cannot be equal to the total divergence of a pair of differential functions of~$\mathbf w$ 
with respect to the space variables~$(x,y)$.

This means that the algebra~$\mathfrak h_b=\mathfrak H{\rm Ch}^0_b$ of Hamiltonian Lie symmetries 
of the system~$\mathcal L_b$ coincides 
with the intersection $\mathfrak g_b\cap\mathfrak h_\spanindex$, 
where \[\mathfrak h_\spanindex:=\langle D(F^1),\,J,\,P(F^2,F^3)\rangle\] 
with the parameter functions~$F^1$, $F^2$ and $F^3$ running through the set of smooth functions of~$t$. 
Note that $\mathfrak h_b$ is an ideal of $\mathfrak g_b$ with $\dim\mathfrak h_b=\dim{\rm Ch}^0_b-1$, and 
$\mathfrak H{\rm Ch}^{0,\rm unf}_b=\mathfrak g^\cap$, 
$\mathfrak H{\rm Ch}^0_\cap=\{0\}$.

Using $\mathfrak H$ as a Noether operator, we can endow the space~${\rm Ch}_b$ of cosymmetries of the system~$\mathcal L_b$ 
with the structure of Lie algebra, where the Lie bracket $[\cdot,\cdot]_{\mathfrak H}$ is defined by 
\[
[\gamma^1,\gamma^2]_{\mathfrak H}
=\mathsf D_{\gamma^2}\mathfrak H\gamma^1
+\mathsf D_{\mathfrak H\gamma^1}^\dag\gamma^2
+(\mathsf D_{\gamma^1}-\mathsf D_{\gamma^1}^\dag)\mathfrak H\gamma^2
\]
for the canonical representatives~$\gamma^1$ and~$\gamma^2$ of arbitrary cosymmetry cosets of~$\mathcal L_b$, 
cf., e.g., \cite{fuch82b} and \cite[Section~3.1]{blas98a}. 
(The canonical representative of a cosymmetry coset of a system of evolution equations 
is the element of the coset that depends on no derivatives involving differentiation with respect to~$t$.) 
Recall that $\mathsf D_\gamma$ and \smash{$\mathsf D_\gamma^\dag$} denote the Fr\'echet derivative of~$\gamma$ 
and its formal adjoint, respectively. 
Formally commuting the elements~\eqref{eq:SpanningCanditatesForChars} of~${\rm Ch}^0_\spanindex$, 
we get 
\begin{gather*}
[\Lambda^0,\Lambda^1_b(F^1)]_\mathfrak H=2F^1(0,0,xb_y-yb_x),
\\[.5ex]
[\Lambda^0,\Lambda^2(F^2)]_\mathfrak H=-\Lambda^3(F^2),\quad
[\Lambda^0,\Lambda^3(F^3)]_\mathfrak H= \Lambda^2(F^3),\quad
\\[.5ex]
[\Lambda^1_b(F^1),\Lambda^1_b(\tilde F^1)]_\mathfrak H=2\Lambda^1_b(F^1\tilde F^1_t-\tilde F^1F^1_t)\\
\qquad{}+\big(0,0,-2(F^1\tilde F^1_t-\tilde F^1F^1_t)(xb_x+yb_y+2b)+(F^1\tilde F^1_{ttt}-\tilde F^1F^1_{ttt})(x^2+y^2)\big),
\\[.5ex]
[\Lambda^1_b(F^1),\Lambda^2(F^2)]_\mathfrak H=-2F^1(0,0,F^2b_x-F^2_{tt}x),
\\[.5ex]
[\Lambda^1_b(F^1),\Lambda^3(F^3)]_\mathfrak H=-2F^1(0,0,F^3b_y-F^3_{tt}y),
\\[.5ex]
[\Lambda^0,\Lambda^4(F^4)]_\mathfrak H=[\Lambda^2(F^2),\Lambda^3(F^3)]_\mathfrak H
=[\Lambda^2(F^2),\Lambda^4(F^4)]_\mathfrak H=[\Lambda^3(F^3),\Lambda^4(F^4)]_\mathfrak H=(0,0,0).
\end{gather*}
Therefore, the space~${\rm Ch}^0_b$ of characteristics of zeroth-order conservation laws of~$\mathcal L_b$
is closed with respect to the above bracket 
in view of the classifying equation~\eqref{eq:2DSWEsClassifyingEqFor0thOrderCLs}. 
Moreover, for any \mbox{$\gamma^1,\gamma^2\in{\rm Ch}^0_b$} we have
\smash{$[\gamma^1,\gamma^2]_{\mathfrak H}=\mathsf D_{\gamma^2}\mathfrak H\gamma^1+\mathsf D_{\mathfrak H\gamma^1}^\dag\gamma^2$} 
since \smash{$\mathsf D_\gamma=\mathsf D_\gamma^\dag$} for the canonical representative~$\gamma$ of any coset 
of conservation-law characteristics of~$\mathcal L_b$. 
In other words, the operator~$\mathfrak H$ gives rise to a homomorphism from the Lie algebra~${\rm Ch}^0_b$ 
to the Lie algebra~$\mathfrak g_b$ with the kernel $\langle\Lambda^4(1)\rangle$ 
and the image $\mathfrak h_b=\mathfrak g_b\cap\mathfrak h_\spanindex$.

The above consideration establishes the Noether relation between 
Hamiltonian Lie symmetries and zeroth-order conservation laws
for each system from the class~\eqref{eq:2DSWEs}. 
Moreover, this consideration explains the facts given 
in Remark~\ref{rem:CorrespondenceBetweenLieSymsAnd0thOrderCLsFor2DSWEs}.
An extension of the space~${\rm Ch}^0_b$  
corresponds to an extension of the algebra~$\mathfrak h_b$ of Hamiltonian Lie symmetries 
rather than an extension of the entire algebra~$\mathfrak g_b$ of Lie symmetries. 
For values of the arbitrary element~$b$, where $\mathfrak h_b$ is a proper subalgebra of~$\mathfrak g_b$, 
the correspondence between~$\mathfrak g_b$ and~${\rm Ch}^0_b$ is broken. 
Since in Cases~\ref{LieSymCase1}$_{\nu\ne-2}$, \ref{LieSymCase2}, \ref{LieSymCase4}, \ref{LieSymCase5}, \ref{LieSymCase10}, \ref{LieSymCase11}, 
\ref{LieSymCase12a}--\ref{LieSymCase12c} with $\alpha\ne0$, \ref{LieSymCase13}, \ref{LieSymCase14} and \ref{LieSymCase15} 
of Theorem~\ref{thm:GroupClassificationOf2DSWEs1}, 
vector fields extending the corresponding algebra~$\mathfrak g_b$ involve~$D^{\rm t}$ and are thus not Hamiltonian, 
these cases have no counterparts among cases of Theorem~\ref{thm:ClassificationOf0thOrderCLsOf2DSWEs1}. 
In Cases~\ref{LieSymCase17}--\ref{LieSymCase22} of Theorem~\ref{thm:GroupClassificationOf2DSWEs1}, 
the algebra $\mathfrak h_b$ extends although it is a proper subalgebra of~$\mathfrak g_b$. 
Hence these cases have counterparts among cases of Theorem~\ref{thm:ClassificationOf0thOrderCLsOf2DSWEs1} 
but $\dim\mathfrak g_b>\dim{\rm Ch}^0_b-1$.
 
One could single out Hamiltonian symmetries among the Lie symmetries listed 
in Theorem~\ref{thm:GroupClassificationOf2DSWEs1} 
and then use them for constructing zeroth-order conservation laws of systems from the class~\eqref{eq:2DSWEs} 
within the framework of the Noether relation between Hamiltonian symmetries and conservation laws 
\cite[Theorem~7.15]{olve93a}. 
This would allow to classify zeroth-order conservation laws
for systems from the class~\eqref{eq:2DSWEs} 
using the group classification of this class in Theorem~\ref{thm:GroupClassificationOf2DSWEs1}.
At the same time, the direct classification of such conservation laws 
in Section~\ref{sec:ClassificationProof} seems to be a more efficient way. 

Now for each value of the arbitrary element~$b$, 
we study the action of Lie symmetries of the system~$\mathcal L_b$ on its zeroth-order conservation laws, 
more precisely, on characteristics of such conservation laws.% 
\footnote{\label{fnt:ActioOfGenSymsOnCosyms}%
\newcommand{\EqOrd}{r}
Let~$\mathcal L$ be a system of $p$ differential equations $L^s(x,u_{(\EqOrd)})=0$, $s=1,\dots,p$,
for $m$ unknown functions $u=(u^1,\ldots,u^m)$
of $n$ independent variables $x=(x_1,\ldots,x_n)$.
Here $u_{(\EqOrd)}$ denotes the set of all the derivatives of the functions $u$ with respect to $x$
of order not greater than~$\EqOrd$, including $u$ as the derivative of order zero.
Suppose that a generalized vector field~$Q=\sum_{i=1}^n\xi^i[u]\p_{x_i}+\sum_{a=1}^m\eta^a[u]\p_{u^a}$ is a generalized symmetry of~$\mathcal L$, 
and a tuple $\gamma=(\gamma^1[u],\dots,\gamma^p[u])^{\mathsf T}$ is a cosymmetry 
(or, more specifically, a conservation-law characteristic) of~$\mathcal L$. 
Here $\xi^i[u]$, $\eta^a[u]$ and $\gamma^s[u]$ are differential functions of~$u$.
Therefore, $Q_{(\EqOrd)}L^s=\sum_{s'=1}^p\Upsilon^{ss'}L^{s'}$
for some matrix operator $\Upsilon=(\Upsilon^{ss'})_{s,s'=1,\dots,p}$ 
whose entries are differential operators in total derivatives with respect to~$x$
with coefficients being differential functions of~$u$. 
Here $Q_{(\EqOrd)}$ is the $r$th prolongation of~$Q$. 
See \cite{olve93a} for definitions. 
The action of~$Q$ on~$\gamma$ is given by 
\[\mathfrak L_Q\gamma=(Q_{(\infty)}+\mathop{\rm Div}\xi)\gamma+\Upsilon^\dag\gamma,\]
where $\mathop{\rm Div}\xi:=\sum_{i=1}^n\mathrm D_i\xi^i$ is the total divergence of $\xi:=(\xi^1[u],\dots,\xi^n[u])$,
$\mathrm D_i$ is the total derivative operator with respect to~$x_i$, $i=1,\dots,n$, 
and the matrix operator~$\Upsilon^\dag$ is the formally adjoint of~$\Upsilon$.
}
Thus, we can construct a generating set for the corresponding space, 
which leads to a generating set for zeroth-order conservation laws.

Let $Q=2D(F^1)-c_1D^{\rm t}-c_2J+P(F^2,F^3)$ with functions~$F^1$, $F^2$ and $F^3$ of~$t$ and constants~$c_1$ and~$c_2$
be a Lie symmetry vector field of the system~$\mathcal L_b$ 
from the class~\eqref{eq:2DSWEs} for a fixed value of the arbitrary element~$b$, $Q\in\mathfrak g_b$.
Then, in the notation of footnote~\ref{fnt:ActioOfGenSymsOnCosyms}, 
$\mathop{\rm Div}\xi=4F^1_t-c_1$, $Q_{(1)}L^s=\sum_{s'=1}^3\Upsilon^{ss'}L^{s'}$, $s=1,2,3$, 
where $(L^1,L^2,L^3)$ is the left-hand side of~$\mathcal L_b$ and 
\[
\Upsilon=-F^1_t\mathop{\rm diag}(3,3,4)+c_1\mathop{\rm diag}(2,2,3)-c_2\begin{pmatrix}0&-1\\1&0\end{pmatrix}\oplus(0).
\]  
Hence the action of~$Q$ on a nonpositive-order cosymmetry~$\gamma$ of~$\mathcal L_b$ is given by 
\[
\mathfrak L_Q\gamma=Q\gamma+
%\left(F^1_t\mathop{\rm diag}(1,1,0)+c_1\mathop{\rm diag}(1,1,2)+c_2\begin{pmatrix}0&-1\\1&0\end{pmatrix}\oplus(0)\right)\gamma,
\begin{pmatrix}F^1_t+c_1&-c_2&0\\c_2&F^1_t+c_1&0\\0&0&2c_1\end{pmatrix}\gamma.
\]
Acting by $\mathfrak L_Q$ on a tuple 
$\gamma=\tilde c_1\Lambda^0+\Lambda^1_b(\tilde F^1)+\Lambda^2(\tilde F^2)+\Lambda^3(\tilde F^3)+\Lambda^4(\tilde F^4)\in{\rm Ch}^0_b$ 
with functions~$\tilde F^1$, $\tilde F^2$, $\tilde F^3$ and $\tilde F^4$ of~$t$ and a constant~$\tilde c_1$, 
we get
\[
\begin{split}
\mathfrak L_Q\gamma={}&3c_1\gamma+
  \Lambda^1_b\big(2(F^1\tilde F^1_t-F^1_t\tilde F^1)-c_1(t\tilde F^1_t-\tilde F^1)\big)\\
&+\Lambda^2\big(2F^1\tilde F^2_t-F^1_t\tilde F^2-2\tilde F^1F^2_t+\tilde F^1_tF^2-c_1t\tilde F^2_t-c_2\tilde F^3+\tilde c_1F^3\big)\\
&+\Lambda^3\big(2F^1\tilde F^3_t-F^1_t\tilde F^3-2\tilde F^1F^3_t+\tilde F^1_tF^3-c_1t\tilde F^3_t+c_2\tilde F^2-\tilde c_1F^2\big)\\
&+\Lambda^4\big(2F^1\tilde F^4_t-c_1t\tilde F^4_t+F^2\tilde F^2_t-\tilde F^2F^2_t+F^3\tilde F^3_t-\tilde F^3F^3_t-c_1\tilde F^4-2F^4\tilde F^1\big),
\end{split}
\]
where the function $F^4=F^4(t)$ is defined by the classifying equation~\eqref{eq:2DSWEsClassifyingEq} 
for given $F^1$, $F^2$, $F^3$, $c_1$, $c_2$ and~$b$. 
Of course, the action of~$Q$ on~$\gamma$ agrees with 
the Noether relation between Hamiltonian symmetries and conservation laws, 
cf.\ the commutation relations~\eqref{eq:2DSWEsCommutationRelations}. 
More specifically, 
$\mathfrak H\mathfrak L_Q\gamma=[Q,\mathfrak H\gamma]+3c_1\mathfrak H\gamma$, 
recalling that a Lie symmetry of~$\mathcal L_b$ is not Hamiltonian if and only if it involves~$D^{\rm t}$.

In view of Theorems~\ref{thm:GroupClassificationOf2DSWEs1} and~\ref{thm:ClassificationOf0thOrderCLsOf2DSWEs1} 
and the found additional equivalences among classification cases, 
it suffices to construct (minimal) generating sets of elements of~${\rm Ch}^0_b$ under action of~$\mathfrak g_b$
for the general value of~$b$ 
and for cases listed in Conjecture~\ref{con:2DSWEsClassificationOfZeroth-orderCLsWrtEquivGroupoid}. 
Thus, generating sets are the following:

\par\medskip\par
General case: $\{\Lambda^4(1),\Lambda^1_b(1)\}$;
\par\medskip\par
\ref{CLsCase1}: $\{\Lambda^4(1),\Lambda^1_b(t)+\beta\Lambda^0\}$;\quad 
\ref{CLsCase2}$_{\delta=0}$: $\{\Lambda^4(1),\Lambda^1_b(1),\Lambda^0\}$;\quad 
\ref{CLsCase2}$_{\delta\ne0}$: $\{\Lambda^1_b(1),\Lambda^0+\delta\Lambda^4(t)\}$;
\par\medskip\par
\ref{CLsCase3a}: $\{\Lambda^4(1),\Lambda^1_b(1)\}$;\quad
\ref{CLsCase4}$_{\delta=0}$: $\{\Lambda^1_b(1),\Lambda^2(t)\}$;\quad 
\ref{CLsCase7a}: $\{\Lambda^4(1),\Lambda^1_b(1),\Lambda^0\}$;\quad 
\ref{CLsCase8a}$_{\delta=0}$: $\{\Lambda^1_b(1)\}$;
\par\medskip\par
\ref{CLsCase10}$_{\delta=0}$, \ref{CLsCase12}$_{\delta=0}$: $\{\Lambda^1_b(1),\Lambda^3(t)\}$;\quad
\ref{CLsCase5}, \ref{CLsCase6}, \ref{CLsCase9}, \ref{CLsCase11}, \ref{CLsCase13}: $\{\Lambda^1_b(1)\}$;\quad
\ref{CLsCase14a}: $\{\Lambda^1_b(1),\Lambda^0\}$. 

\begin{remark}\label{rem:IbragimovResults}
%\looseness=-1
In~\cite[Section~25.4]{ibra85A}, the basis zeroth-order conservation laws 
of the system~$\mathcal L_0$ of shallow water equations with flat bottom topography 
(Case~\ref{CLsCase14a} of Theorem~\ref{thm:ClassificationOf0thOrderCLsOf2DSWEs1})
were generated via acting by Lie symmetries on the conservation laws of energy and angular momentum, 
which have the characteristics~$\Lambda^1_0(1)$ and~$\Lambda^0$, respectively.%
\footnote{%
More precisely, this procedure was realized for the equations of motion of an ideal polytropic gas 
and then supplemented by the observation that the reduction of these equations 
to the case of two-dimensional isotropic gas flows with adiabatic exponent two coincides, 
after re-interpreting the meaning of the dependent variables, with the system~$\mathcal L_0$. 
}
Nevertheless, no argument was presented that the constructed conservation laws span 
the entire space of zeroth-order conservation laws of~$\mathcal L_0$. 
As was noted in~\cite[Section~25.4]{ibra85A}, 
a common property of mechanical models that are invariant with respect to the Galilean group 
is that two conservation laws are basic for them, the conservation of energy and angular momentum. 
The system~$\mathcal L_0$ shares this property in the above sense.
\end{remark}

\begin{remark}\label{rem:CLsComparisonWithEulerEqs}
%\looseness=1
Both the two- and three-dimensional (incompressible) Euler equations also admit Hamiltonian structures~\cite{olve82a} 
although these structures are not standard since they are formulated in terms of the vorticity 
but not the original dependent variables, which are the velocity and the pressure; 
see additionally \cite[Example~7.10]{olve93a}. 
Up to linearly combining, all the known conservation laws for these equations are exhausted 
by zeroth-order conservation laws associated with their Hamiltonian Lie symmetries via the Noether relation, 
first-order conservation laws associated with distinguished (Casimir) functionals of the corresponding Hamiltonian operators 
and zeroth-order conservation laws with zero densities that are related to the incompressibility condition; 
cf.~\cite{olve82a} and \cite[Example~7.17]{olve93a}. 
It was conjectured in~\cite{olve82a} and is still not proved 
that the Euler equations possess no proper (i.e., positive-order) generalized symmetries 
and thus, the entire space of their local conservation laws is spanned by the known ones. 
We conjecture that the similar assertion holds for the shallow water equations with an arbitrary bottom topography as well.  
These conjectures are in particular supported by the description of generalized symmetries and local conservation laws 
of the three-dimensional (incompressible) Navier--Stokes equations, which was obtained in~\cite{gusy89b}. 
More specifically, it was proved therein that any generalized symmetry and any conservation-law characteristic of these equations 
are equivalent to a Lie symmetry and to a conservation-law characteristic of order~$-\infty$, respectively.
Comprehensive lists of independent local conservation laws were presented 
in~\cite{chev14a} for the three-dimensional \mbox{Euler} and Navier--Stokes equations
and in~\cite{kelb13a} for their reductions, including the two--dimensional Euler equations. 
The completeness of these lists for conservation laws with characteristics up to order two and zero, respectively, 
was shown therein using the package {\sf GeM}~\cite{chev07a}. 
(We had checked such completeness for the (two-dimensional) vorticity equation up to order four 
using some tricks and the package {\sf Jets} \cite{BaranMarvan,marv09a}.)
The space of distinguished (Casimir) functionals for the two--dimensional Euler equations 
is parameterized by an arbitrary function of the vorticity, similarly to  the shallow water equations. 
(In space dimension three, the corresponding space is spanned by the single functional associated with the conservation of helicity.) 
Modulo the conservation laws originated by the incompressibility condition, 
the comparison of zeroth-order conservation laws of the two--dimensional Euler equations 
and of the systems from the class~\eqref{eq:2DSWEs} is similar, due to the presence of Hamiltonian structures, 
to the discussion of Lie symmetries in Remark~\ref{rem:LieSymsComparisonWithEulerEqs}. 
%\looseness=1
\end{remark}

\section{Conclusion}\label{sec:ConclusionsSWE}

We classified zeroth-order conservation laws
of two-dimensional shallow water equations with variable bottom topography, 
which are of the form~\eqref{eq:2DSWEs}, 
up to the equivalence generated by the action of the associated equivalence group $G^\sim$.
The classification is summarized in Theorem~\ref{thm:ClassificationOf0thOrderCLsOf2DSWEs1}. 
To prove it, we used the version of the method of furcate splitting suggested in~\cite{bihl19a}. 
In fact, this is the first application of the method of furcate splitting
to classifying conservation laws in the literature. 
The template form~\eqref{eq:TemplateFormForClassificationOfZeroth-orderCLs} 
arising in the course of the classification 
is a specification, by the margin of one free coefficient, of the template form for furcate splitting 
in the course of the group classification of the class~\eqref{eq:2DSWEs}. 
This is why we just modified the proof of Theorem~\ref{thm:GroupClassificationOf2DSWEs1}
presented in~\cite{bihl19a}.

\looseness=1
The relation between the template forms is a display 
of the Noether relation between Hamiltonian symmetries and conservation laws 
for systems from the class~\eqref{eq:2DSWEs}.
The Hamiltonian nature of these systems
also allowed us to relate the classification lists for their Lie symmetries and for their zeroth-order conservation laws, 
which gave one more check of both the classifications and contributed to a deeper understanding of them.
Due to the possibility of following the proof of group classification, 
the direct construction of zeroth-order conservation laws of systems from the class~\eqref{eq:2DSWEs}
is of the same order of computational complexity as 
the construction based on the Noether relation between Hamiltonian symmetries and conservation laws. 
An incidental benefit of the direct construction is a test of furcate splitting
as a method for classifying low-order conservation laws 
that can be applied to classes of non-Hamiltonian systems as~well. 

A Lie structure underlying the template-form equations for conservation laws of each system from the class~\eqref{eq:2DSWEs}
provided us with algebraic tools for optimizing the procedure of furcate splitting.  
It is yet not clear whether this Lie structure is specific for systems from the class~\eqref{eq:2DSWEs} 
and induced by their Hamiltonian structure  
or whether similar Lie structures may arise in the course of classification of conservation laws 
for non-Hamiltonian systems.

Along with zeroth-order conservation laws, 
each system from the class~\eqref{eq:2DSWEs} admits 
an infinite number of linearly independent first-order conservation laws 
associated with distinguished (Casimir) functionals of the Hamiltonian operator~$\mathfrak H$, 
which is common for all such systems. 
An interesting question is whether the above conservation laws span 
the entire space of local conservation laws for any system from the class~\eqref{eq:2DSWEs}.

The present paper has focused exclusively on deriving the zeroth-order conservation law classification. 
We hope that these results prove relevant for practitioners developing numerical models for tsunami propagation 
and other applications in atmosphere--ocean science. 
Conservation laws provide constraints that have to be satisfied by any solution 
of the corresponding system of differential equations. 
Therefore, monitoring the numerical conservation over time can provide important feedback 
for assessing the quality of newly developed numerical models to these equations. 

We believe that the present paper jointly with~\cite{bihl19a} can also be regarded as a stepping stone 
towards the realization of a more comprehensive research program 
devoted to studying the geometric properties of the shallow water equations. 
While in~\cite{bihl19a} we have solved the complete group classification problem for the class~\eqref{eq:2DSWEs}, 
it still remains to compute the equivalence groupoid of this class 
and to carry out inequivalent Lie reductions for constructing the associated exact group-invariant solutions. 
The simultaneous reduction of conservation laws and Hamiltonian structures may be relevant for the latter problem. 
A related task that should be completed in the course of consideration of systems from the class~\eqref{eq:2DSWEs} 
within the framework of group analysis of differential equations is the study of hidden symmetries of these systems. 
Here one investigates whether the reduced systems admit additional symmetries not inherited from the symmetries of the original systems. 
These hidden symmetries could then be used for finding additional group-invariant solutions of~\eqref{eq:2DSWEs}. 
The motivation behind this investigation would again be to construct new exact solutions to~\eqref{eq:2DSWEs}, 
which could then be used as reference solutions for testing numerical models of the shallow water equations. 
The study of hidden symmetries can be extended to other hidden objects, 
including hidden conservation laws and hidden Hamiltonian structures. 
We have started to implement the above research program for the shallow water equations with flat bottom topography.

\appendix

\section{Details of the proof using the method of furcate splitting}
\label{sec:DetailsOfProof}

In this section, we present details of applying the method of furcate splitting 
to classifying zeroth-order conservation laws of systems from the class~\eqref{eq:2DSWEs}. 
See Section~\ref{sec:ClassificationProof} for notations.

\subsection{One independent template-form equation}

If $k=1$, then the right-hand side of the classifying equation~\eqref{eq:2DSWEsClassifyingEqFor0thOrderCLs}
is proportional to the right-hand side of the single template-form equation~\eqref{eq:2DSWEsSysOfTemplateFormEqs}, 
where the proportionality coefficient~$\lambda$ is a sufficiently smooth function of~$t$, i.e.
\begin{gather*}
F^1_t(xb_x+yb_y+2b)+c_1(yb_x-xb_y)+F^2b_x+F^3b_y-F^1_{ttt}\frac{x^2+y^2}{2}-F^2_{tt}x-F^3_{tt}y-F^4_t\\
=\lambda\bigg(a^1_1(xb_x+yb_y+2b)+a^1_2(yb_x-xb_y)+a^1_3b_x+a^1_4b_y+a^1_5\frac{x^2+y^2}2+a^1_6x+a^1_7y+a^1_8\bigg).
\end{gather*}
The function~$\lambda$ is nonvanishing for any tuple 
from the complement of~\smash{${\rm Ch}^{0,\rm unf}_b$} in~${\rm Ch}^0_b$.
Splitting the last equation with respect to the involved derivatives of $b$ (including~$b$ itself) 
and the independent variables~$x$ and~$y$ yields the system
\begin{gather}\label{k1}
\begin{split}
&F^1_t=a^1_1\lambda,\quad c_1=a^1_2\lambda,\quad F^2=a^1_3\lambda,\quad F^3=a^1_4\lambda,\\
&F^1_{ttt}=-a^1_5\lambda,\quad F^2_{tt}=-a^1_6\lambda,\quad F^3_{tt}=-a^1_7\lambda,\quad F^4_t=-a^1_8\lambda.
\end{split}
\end{gather}
Requiring that $\rank A_4=k=1$ in the present case implies that $(a^1_1,a^1_2,a^1_3,a^1_4)\ne(0,0,0,0)$.
The consideration therefore splits into the following three cases, which we consider subsequently:
\[
a^1_1\ne0;\quad a^1_1=0,\ a^1_2\ne0;\quad a^1_1=a^1_2=0,\ (a^1_3,a^1_4)\ne(0,0).
\]

%\medskip\par
\noindent$\boldsymbol{a^1_1\ne0.}$
First we rescale the equation~\eqref{eq:2DSWEsSysOfTemplateFormEqs} to set $a^1_1=1$. 
Then we gauge other coefficients of~\eqref{eq:2DSWEsSysOfTemplateFormEqs} 
by transformations from $G^\sim$ for further simplifying the computations. 
Thus, we set $a^1_3=a^1_4=0$ and $a^1_8=0$ with point equivalence transformations of simultaneous shifts with respect to $(x,y)$ 
and of shifts with respect to $b$, respectively. 
Under the imposed constraints, we get $F^2=0$, $F^3=0$ and $F^4_t=0$,
and the system~\eqref{k1} yields $a^1_6=a^1_7=0$.
In the polar coordinates~$(r,\varphi)$, the equation~\eqref{eq:2DSWEsSysOfTemplateFormEqs} reduces to 
\[
rb_r-a^1_2b_\varphi+2b+\frac12a^1_5r^2=0,
\]
and its solution depends on values of the parameters $a^1_2$ and~$a^1_5$.

For $a^1_2=0$, we can make $a^1_5\in\{0,-4,4\}$ by scaling equivalence transformations,
which produces Cases \ref{CLsCase3a}, \ref{CLsCase3b} and \ref{CLsCase3c}
of Theorem~\ref{thm:ClassificationOf0thOrderCLsOf2DSWEs1}, respectively.

For $a^1_2\ne0$, due to the system~\eqref{k1} we find $c_1=a^1_2\lambda$.
Therefore, $\lambda$ is a constant, and hence $F^1_t$ is also a constant, which implies $a^1_5=0$,
leading to Cases~\ref{CLsCase1}. 

\medskip\par\noindent$\boldsymbol{a^1_1=0,\, a^1_2\ne0.}$
By rescaling the equation~\eqref{eq:2DSWEsSysOfTemplateFormEqs} as well as  
shifting $x$ and $y$ and scaling variables,
we can assume $a^1_2=1$, $a^1_3=a^1_4=0$ and $a^1_8\in\{0,1\}$.
These conditions for~$a$'s jointly with the system~\eqref{k1} imply that
$F^1_t=0$, $F^2=F^3=0$, $\lambda=c_1$, and thus $a^1_5=a^1_6=a^1_7=0$.
After integrating the equation~\eqref{eq:2DSWEsSysOfTemplateFormEqs} 
using its representation $b_\varphi=a^1_8$ in the polar coordinates~$(r,\varphi)$, 
we get Case~\ref{CLsCase2}.

\par\medskip\par\noindent $\boldsymbol{a^1_1=a^1_2=0,\, (a^1_3,a^1_4)\ne(0,0).}$
Rotation equivalence transformations
and rescaling the equation~\eqref{eq:2DSWEsSysOfTemplateFormEqs} 
allow rotating and scaling the tuple $(a^1_3,a^1_4)$ to set $a^1_3=1$ and $a^1_4=0$.
From the system~\eqref{k1}, we obtain $F^1_t=F^3=0$, $c_1=0$, 
$F^2=\lambda$ and therefore $a^1_5=a^1_7=0$ and $F^2_{tt}=-a^1_6F^2$.
Thus, the template-form equation~\eqref{eq:2DSWEsSysOfTemplateFormEqs} takes the form
$b_x+a^1_6x+a^1_8=0$.
We can set the constraint $a^1_6\in\{0,-1,1\}$ by using scaling equivalence transformations,
which gives Cases~\ref{CLsCase4}, \ref{CLsCase5} and~\ref{CLsCase6}.
Note that $a^1_8=0\bmod G^\sim$ if $a^1_6\ne0$
and $a^1_8\in\{0,-1\}\bmod G^\sim$ if $a^1_6=0$.

\subsection{Two independent template-form equations}

If $k=2$, then the right-hand side of the classifying equation~\eqref{eq:2DSWEsClassifyingEqFor0thOrderCLs} is represented as 
a linear combination of right-hand sides of the first and the second equations of the system~\eqref{eq:2DSWEsSysOfTemplateFormEqs}
with coefficients~$\lambda^1$ and~$\lambda^2$ that are functions of~$t$ and depend on the parameters involved in elements of~${\rm Ch}^0_b$,
\begin{gather*}
F^1_t(xb_x+yb_y+2b)+c_1(yb_x-xb_y)+F^2b_x+F^3b_y-F^1_{ttt}\frac{x^2+y^2}{2}-F^2_{tt}x-F^3_{tt}y-F^4_t\\
=\sum_{i=1}^2\lambda^i\bigg(a^i_1(xb_x\!+yb_y\!+2b)+a^i_2(yb_x\!-xb_y)+a^i_3b_x\!+a^i_4b_y\!+a^i_5\frac{x^2\!+y^2}{2}+a^i_6x+a^i_7y+a^i_8\bigg).
\end{gather*}

\begin{remark}\label{rem:2DSWEsOnNonproportionalityOfMultipliersOfTemplate-formEqs}
The coefficients~$\lambda^1$ and~$\lambda^2$ cannot be proportional to each other
with the same constant multiplier for all elements of~${\rm Ch}^0_b$.
Indeed, otherwise, i.e., if $\mu_1\lambda^1+\mu_2\lambda^2=0$ 
for some constants~$\mu_1$ and~$\mu_2$ with $(\mu_1,\mu_2)\ne(0,0)$,
there would be no additional conservation-law extensions 
as compared to the more general case of conservation-law extensions with $k=1$,
and the linear combination of the equations~\eqref{eq:2DSWEsSysOfTemplateFormEqs} 
with coefficients~$\mu_2$ and~$-\mu_1$
would play the role of a single template-form equation.
For the same reason, the coefficients~$\lambda^1$ and~$\lambda^2$ 
are simultaneously nonvanishing for some elements of~${\rm Ch}^0_b$.
\end{remark}

Similar to the case $k=1$, we can split the above condition 
with respect to the arbitrary element~$b$, its derivatives~$b_x$ and~$b_y$ and the independent variables~$x$ and~$y$, 
obtaining the system
\begin{gather}\label{gen_restriction}
\begin{split}
&F^1_t=a^1_1\lambda^1+a^2_1\lambda^2,\quad c_1=a^1_2\lambda^1+a^2_2\lambda^2,\quad F^2=a^1_3\lambda^1+a^2_3\lambda^2, \quad F^3=a^1_4\lambda^1+a^2_4\lambda^2,\\
&F^1_{ttt}=-a^1_5\lambda^1-a^2_5\lambda^2,\quad F^2_{tt}=-a^1_6\lambda^1-a^2_6\lambda^2,\quad F^3_{tt}=-a^1_7\lambda^1-a^2_7\lambda^2,\\ 
&F^4_t=-a^1_8\lambda^1-a^2_8\lambda^2.
\end{split}
\end{gather}

Depending on the rank of the submatrix $A_2$, which consists of the first two columns of~$A$ and thus $\rank A_2\leqslant2$, 
the further study of the case $k=2$ splits into three subcases, $\rank A_2=2$, $\rank A_2=1$ and $\rank A_2=0$

\medskip\par\noindent $\boldsymbol{\rank A_2=2.}$
We linearly recombine equations of the system~\eqref{eq:2DSWEsSysOfTemplateFormEqs} 
to set $A_2$ to the $2\times2$ identity matrix or, equivalently, $a^1_1=a^2_2=1$ and $a^1_2=a^2_1=0$.
For further simplification of the system~\eqref{eq:2DSWEsSysOfTemplateFormEqs},
we set $a^1_3=a^1_4=a^1_8=0$ by shifting $x$, $y$ and~$b$, respectively.
The vector fields~$\mathbf v_1$ and~$\mathbf v_2$,
which are associated with the first and the second equations of the reduced system~\eqref{eq:2DSWEsSysOfTemplateFormEqs}, respectively, 
commute in view of the condition~\eqref{eq:CompatibilityCondInTermsOfAssociatedVFs},  
which leads to the following system of algebraic equations for the coefficients~$a^i_j$:
\begin{gather}\label{alg_condition}
a^2_3=a^2_4=0,\quad
a^1_6-a^2_7-2a^2_7=0,\quad
a^1_7+a^2_6+2a^2_6=0,\quad
a^2_5=a^2_8=0.
\end{gather}
The system~\eqref{gen_restriction} then reduces to
\begin{gather}\label{restrict2.2}
\begin{split}
&F^1_t=\lambda^1,\quad c_1=\lambda^2,\quad F^2=0, \quad F^3=0,\quad \lambda^1_{tt}+a^1_5\lambda^1=0,\\
&a^1_6\lambda^1+a^2_6\lambda^2=0,\quad a^1_7\lambda^1+a^2_7\lambda^2=0,\quad F^4_t=0.
\end{split}
\end{gather}
Remark~\ref{rem:2DSWEsOnNonproportionalityOfMultipliersOfTemplate-formEqs} implies that
the sixth and the seventh equations of the system~\eqref{restrict2.2}
are equivalent to $a^1_6=a^2_6=0$ and $a^1_7=a^2_7=0$, respectively.
The system~\eqref{eq:2DSWEsSysOfTemplateFormEqs} reduces in the polar coordinates~$(r,\varphi)$ to
$rb_r+2b+\frac12a^1_5r^2=0$, $b_\varphi=0$, 
with the parameter~$a^1_5$ belonging to $\{0,-4,4\}$ (up to scaling equivalence transformations).
The general solution of the last system is 
\begin{gather*}
b=b_0r^{-2}-\frac{a^1_5}8r^2,
\end{gather*}
where we should assume the integration constant~$b_0$ to be nonvanishing
since the above value of~$b$ with $b_0=0$ is related to the case $k=4$.
We can therefore scale~$b_0$ by an equivalence transformation to $\varepsilon=\pm1$
and obtain, depending on the value of the parameter~$a^1_5$,   
Cases~\ref{CLsCase7a},~\ref{CLsCase7b} and \ref{CLsCase7c} 
of Theorem~\ref{thm:ClassificationOf0thOrderCLsOf2DSWEs1}.

\par\medskip\par\noindent $\boldsymbol{\rank A_2=1.}$
We begin by linearly recombining equations of the system~\eqref{eq:2DSWEsSysOfTemplateFormEqs},
thereby reducing the matrix $A_2$ to
\begin{gather*}
A_2=\begin{pmatrix}
a^1_1&a^1_2\\
0&0
\end{pmatrix}.
\end{gather*}
In other words, $a^2_1=a^2_2=0$ and $(a^1_1,a^1_2)\ne(0,0)$.
Since $\rank A_4=2$, we also get $(a^2_3,a^2_4)\ne (0,0)$.
From the condition~\eqref{eq:CompatibilityCondInTermsOfAssociatedVFs} it follows that $a^1_2=0$ and hence $a^1_1\ne0$.
We can thus set $a^1_1=1$ by re-scaling the first equation,
$a^2_3=1$, $a^2_4=0$ by a rotation equivalence transformation and re-scaling the second equation 
as well as $a^1_3=a^1_4=a^1_8=0$ by shifts of~$x$, $y$ and~$b$.
The commutation relation $[\mathbf v_1,\mathbf v_2]=-\mathbf v_2$, 
which is derived from the condition~\eqref{eq:CompatibilityCondInTermsOfAssociatedVFs}, 
yields the following system of algebraic equations for the remaining coefficients~$a^i_j$:
\begin{gather}\label{rk1_alg_restriction}
a^2_5=a^2_7=0,\quad
4a^2_6=a^1_5,\quad
3a^2_8=a^1_6.
\end{gather}
The system~\eqref{gen_restriction} reduces to
\begin{gather}\label{rk1_split}
\begin{split}
&F^1_t=\lambda^1,\quad c_1=0,\quad F^2=\lambda^2,\quad F^3=0,\\
&F^1_{ttt}=-a^1_5\lambda^1,\quad F^2_{tt}=-a^1_6\lambda^1-a^2_6\lambda^2,\quad
a^1_7\lambda^1=0,\quad F^4_t=-a^2_8\lambda^2.
\end{split}
\end{gather}
After recalling Remark~\ref{rem:2DSWEsOnNonproportionalityOfMultipliersOfTemplate-formEqs},
we get from the eighth equation of the system~\eqref{rk1_split} that $a^1_7=0$.

For $a^1_5\ne0$, shifting $x$ and~$b$ and recombining equations~\eqref{eq:2DSWEsSysOfTemplateFormEqs},
we can set $a^1_6=0$. Then $a^2_8=0$.
The integration of the system~\eqref{eq:2DSWEsSysOfTemplateFormEqs} gives the expression
$b=b_0y^{-2}-\frac18a^1_5r^2$,
where again the integration constant~$b_0$ does not vanish in view of the assumption $k=2$ 
as otherwise this value of the arbitrary element~$b$ is associated with $k=4$.
Scaling equivalence transformations allow us to set $b_0,a^1_5/4\in\{-1,1\}$.
As a result, we derive Cases~\ref{CLsCase8b} and~\ref{CLsCase8c} of Theorem~\ref{thm:ClassificationOf0thOrderCLsOf2DSWEs1} 
for $a^1_5=-4$ and $a^1_5=4$, respectively.

For $a^1_5=0$, we also have $a^2_6=0$. 
Integrating the system~\eqref{eq:2DSWEsSysOfTemplateFormEqs}, we obtain
$b=b_0y^{-2}-a^2_8x$,
%\[b=\frac{c}{y^2}-a^2_8x,\]
where again the integration constant~$b_0$ is nonzero in view of the same argument as above.
Up to scaling equivalence transformations and alternating the signs of $(x,u)$,
we can make $b_0\in\{-1,1\}$ and $a^2_8\in\{0,-1\}$,
which leads to Case~\ref{CLsCase8a}.

\medskip\par\noindent $\boldsymbol{\rank A_2=0.}$
In other words, $a^1_1=a^1_2=a^2_1=a^2_2=0$.
Because $\rank A_4=2$, equations of the system~\eqref{eq:2DSWEsSysOfTemplateFormEqs} can be linearly recombined 
for getting $a^1_3=a^2_4=1$ and $a^1_4=a^2_3=0$.
The compatibility condition~\eqref{eq:CompatibilityCondInTermsOfAssociatedVFs} implies
that the vector fields~$\mathbf v_1$ and~$\mathbf v_2$ 
associated to equations of the reduced system~\eqref{eq:2DSWEsSysOfTemplateFormEqs}
commute, i.e.\ $[\mathbf v_1,\mathbf v_2]=0$. 
This condition is equivalent to the system
\begin{gather}\label{eq:2DSWEsK1RankA20SysForA}
a^1_5=a^2_5=0,\quad
a^1_7=a^2_6.
\end{gather}
Due to the last equation, we can set $a^1_7=a^2_6=0$ by rotation equivalence transformations. 
Then the system~\eqref{eq:2DSWEsSysOfTemplateFormEqs} takes the form 
$b_x+a^1_6x+a^1_8=0$, $b_y+a^2_7y+a^2_8=0$.
Its general solution is 
\begin{gather}\label{eq:2DSWEsK2rA20FormOfB}
b=-\frac{a^1_6}{2}x^2-\frac{a^2_7}{2}y^2-a^1_8x-a^2_8y+b_0,
\end{gather}
where $b_0$ is an integration constant. 
Therefore, $a^1_6\ne a^2_7$ since otherwise 
this form of the arbitrary element~$b$ is related to the value $k=4$, 
which contradicts the supposition $k=2$.
We can set $b_0=0$ modulo equivalence transformations of shifts with respect to~$b$.
Up to the $G^\sim$-equivalence, we can also set $a^1_6=\pm 1$, $a^1_8=0$; $|a^2_7|<1$ if $a^1_6a^2_7>0$;
$a^2_8=0$ if $a^2_7\ne0$; $a^2_8\in\{-1,0\}$ if $a^2_7=0$.
$G^\sim$-inequivalent values of~$b$ of the form~\eqref{eq:2DSWEsK2rA20FormOfB}
jointly with the corresponding spaces of conservation-law characteristics are collected in
Cases~\ref{CLsCase9}--\ref{CLsCase13} of Theorem~\ref{thm:ClassificationOf0thOrderCLsOf2DSWEs1}.

\subsection{More independent template-form equations}

We show below that in fact the case $k=3$ is not possible and thus $k=4$ if $k>2$.

\medskip\par\noindent$\boldsymbol{k=3.}$
The condition $\rank A_4=\rank A=k=3$ necessarily implies that $\rank A_2>0$.

The assumption $\rank A_2=2$ results in a contradiction. 
Indeed, under this assumption we can linearly recombining equations of the system~\eqref{eq:2DSWEsSysOfTemplateFormEqs}
to reduce the submatrix $A_2$ to the form
\[
A_2=\begin{pmatrix}
1&0\\
0&1\\
0&0
\end{pmatrix}.
\]
After the reduction, the projections 
of the vector fields $\mathbf v_1$, $\mathbf v_2$ and $\mathbf v_3$
to the space with the coordinates $(x,y)$ are given by
\begin{gather*}
\hat{\mathbf v}_1=(x+a^1_3)\p_x+(y+a^1_4)\p_y,\quad
\hat{\mathbf v}_2=(y+a^2_3)\p_x-(x-a^2_4)\p_y,\quad
\hat{\mathbf v}_3=a^3_3\p_x+a^3_4\p_y.
\end{gather*}
The condition~\eqref{eq:CompatibilityCondInTermsOfProjectionsOfAssociatedVFs}
requires that the commutator $[\hat{\mathbf v}_2,\hat{\mathbf v}_3]=-a^3_4\p_x+a^3_3\p_y$
belongs to the span $\langle\hat{\mathbf v}_1,\hat{\mathbf v}_2,\hat{\mathbf v}_3\rangle$
but this is not the case.

Hence $\rank A_2=1$, and consequently the linear recombination of equations~\eqref{eq:2DSWEsSysOfTemplateFormEqs}
reduces the matrix $A_4$ to the form
\[
A_4=\begin{pmatrix}
a^1_1&a^1_2&0&0\\
0&0&1&0\\
0&0&0&1
\end{pmatrix},
\quad\mbox{where}\quad (a^1_1,a^1_2)\ne(0,0).
\]
Then the compatibility condition~\eqref{eq:CompatibilityCondInTermsOfAssociatedVFs} 
is expanded to the commutation relations
\[
[\mathbf v_1,\mathbf v_2]=-a^1_1\mathbf v_2+a^1_2\mathbf v_3,\quad
[\mathbf v_1,\mathbf v_3]=-a^1_1\mathbf v_3-a^1_2\mathbf v_2,\quad
[\mathbf v_2,\mathbf v_3]=0.
\]
The last commutation relation implies $a^2_5=a^3_5=0$ and $a^3_6=a^2_7$.
Modulo rotation equivalence transformations, we can set $a^2_7=a^3_6=0$.
The former commutation relations together with the last constraint imply the equations 
$a^1_1(a^2_6-a^3_7)=0$ and $a^1_2(a^2_6-a^3_7)=0$.
In view of $(a^1_1,a^1_2)\ne(0,0)$, we then derive $a^3_7=a^2_6$.
In view of the equations $b_x+a^2_6x+a^2_8=0$ and $b_y+a^2_6y+a^3_8=0$, 
the arbitrary element~$b$ is necessarily a quadratic polynomial of $(x,y)$ 
with the same coefficients of $x^2$ and of $y^2$ and with zero coefficient of~$xy$. 
It is obvious that each such quadratic polynomial satisfies 
a system of the form~\eqref{eq:2DSWEsSysOfTemplateFormEqs} with $k=4$,
which contradicts the assumption $k=3$.

\medskip\par\noindent$\boldsymbol{k=4.}$
Taking into account the condition $\rank A_4=\rank A=4$,
we linearly recombine the equations~\eqref{eq:2DSWEsSysOfTemplateFormEqs} 
to reduce $A_4$ to the $4\times4$ identity matrix.
Due to simultaneously reducing the form of the vector fields~$\mathbf v_1$, \dots, $\mathbf v_4$,
\noprint{
\begin{gather*}
\mathbf v_1=x\p_x+y\p_y-\left(2a^1_1b+\tfrac12a^1_5(x^2+y^2)+a^1_6x+a^1_7y+a^1_8\right)\p_b,\\
\mathbf v_2=y\p_x-x\p_y-\left(2a^2_1b+\tfrac12a^2_5(x^2+y^2)+a^2_6x+a^2_7y+a^2_8\right)\p_b,\\
\mathbf v_3=\p_x-\left(2a^3_1b+\tfrac12a^3_5(x^2+y^2)+a^3_6x+a^3_7y+a^3_8\right)\p_b,\\
\mathbf v_4=\p_y-\left(2a^4_1b+\tfrac12a^4_5(x^2+y^2)+a^4_6x+a^4_7y+a^4_8\right)\p_b,
\end{gather*}
}
the compatibility condition~\eqref{eq:CompatibilityCondInTermsOfAssociatedVFs} in particular gives
the commutation relations
\[
[\mathbf v_2,\mathbf v_3]=\mathbf v_4,\quad
[\mathbf v_2,\mathbf v_4]=-\mathbf v_3,\quad
[\mathbf v_3,\mathbf v_4]=0.
\]
The last commutation relation leads to the equations $a^3_5=a^4_5=0$ and $a^3_7=a^4_6$.
Revisiting the first two commutation relations, we derive
$a^2_5=-2a^3_7=2a^3_7$, $a^3_6=a^4_7$, and thus $a^3_7=0$. 
As a result, the subsystem of the last two equations~\eqref{eq:2DSWEsSysOfTemplateFormEqs}  
takes the form $b_x+a^3_6x+a^3_8=0$, $b_y+a^3_6y+a^4_8=0$.
Any solution of this subsystem is a quadratic polynomial of $(x,y)$ 
with the same coefficients of $x^2$ and of $y^2$ and with zero coefficient of~$xy$. 
Moreover, $k=4$ for all such values of~$b$, 
and only four of them are $G^\sim$-inequivalent,
$b=0$, $b=x$, $b=\tfrac12(x^2+y^2)$ and $b=-\tfrac12(x^2+y^2)$.
which correspond to Cases~\ref{CLsCase14a}, \ref{CLsCase14b}, \ref{CLsCase14c} and~\ref{CLsCase14d}
of Theorem~\ref{thm:ClassificationOf0thOrderCLsOf2DSWEs1}, respectively.

\section*{Acknowledgments}

We thank the two anonymous reviewers for providing valuable suggestions that contributed to the improvement of our paper.
We are also grateful to Lada Atamanchuk-Anhel, Stanislav Opanasenko, Dmytro Popovych, Galyna Popovych and Artur Sergyeyev 
for useful discussions and interesting comments. 
The research of AB was undertaken, in part, thanks to funding from the Canada Research Chairs program,
the InnovateNL LeverageR{\&}D program and the NSERC Discovery program.
The research of ROP was supported by the Austrian Science Fund (FWF), projects P25064 and P30233.

\footnotesize
%\bibliographystyle{rop}
%\bibliography{bihlo}

\end{document}